\documentclass[runningheads]{llncs}
\usepackage[T1]{fontenc}
\usepackage{graphicx}
\usepackage{amsmath}
\usepackage{amsfonts}
\usepackage{amssymb}
\usepackage{mathtools}
\usepackage{tikz}
\usepackage{hyphenat}
\usepackage{nicefrac}
\usepackage{booktabs}
\usepackage{dsfont}
\usepackage{todonotes}
\usepackage{bm}
\usepackage{BOONDOX-calo} 
\usepackage{authblk}
\usepackage{rotating}
\usepackage{multirow}
\usepackage{subcaption}
\usepackage{enumitem}

\usetikzlibrary{
    automata, 
    positioning, 
    calc, 
    shapes, 
    arrows, 
    fit
    }

\definecolor{PastelGreen}{HTML}{6ECB63}
\definecolor{FernGreen}{HTML}{116530}
\definecolor{CadmiumGreen}{HTML}{046C41}
\definecolor{Burgundy}{RGB}{141, 2, 31}
\definecolor{Salmon}{RGB}{250, 128, 114}

\tikzset{
	a/.style={
        draw,
        rounded corners=2pt, 
        minimum width=1cm, 
        minimum height=1cm, 
        thick,
        color=black,
        fill=CadmiumGreen, 
        text=white,
        font=\Large
        },
	b/.style={
        draw,
        rounded corners=2pt, 
        minimum width=1cm, 
        minimum height=1cm, 
        thick,
        color=black,
        fill=Burgundy, 
        text=white,
        font=\Large
        }
    }

\renewcommand{\phi}{\varphi}

\renewcommand{\epsilon}{\varepsilon}
\renewcommand{\theta}{\vartheta}

\newcommand{\vote}{\mathit{Vote}}

\begin{document}
\title{
    You All Need to Hear This: Wisdom of Crowds with Deliberating Jurors
    }
%
%
\author{%
    Rosa Bauer\inst{1}
    \and
    Adrian Haret\inst{2}
    }
\authorrunning{
    Bauer \& Haret
    }
%
\institute{%
    LMU Munich 
    \and 
    Royal Holloway, University of London\\
    \email{rosa.bauer@icloud.com}, 
    \email{adrian.haret@rhul.ac.uk}
    }
\maketitle              
\begin{abstract}
    We analyze a formal model of deliberation in which 
    agents have probabilistic access to evidence
    and selectively disclose it to their peers in order to induce opinion change. 
    We present theoretical and simulation results that tackle the question: 
    does deliberation improve the collective accuracy of the group? 
    Through a jury theorem we chart the conditions that allow wisdom of crowds
    to emerge after a few steps of deliberation, and, through simulations, 
    we show how this effect is sensitive to the evidence acquisition mechanism
    and, implicitly, to the diversity in the initial evidence distribution.

\end{abstract}

\section{Introduction}\label{section:introduction}
A persistent worry about democratic decision-making is that it relies on flawed inputs:
elections aggregate whatever opinions voters happen to express, without 
any guarantee that such opinions are well-informed or unbiased. 
The result is that the collective choice is often perceived to be sub-optimal, 
and trust in democratic institutions is eroded \cite{Brennan2016t,Caplan2008w}.

Deliberation is often proposed as a remedy for this problem 
\cite{Bachtiger2024s,Dryzek2019q,Estlund2018r,Estlund2008a,Fishkin2025f,Hong2004u,Landemore2013z}. 
The positive case for deliberation is that 
structured communication and weighing of reasons exposes
decision makers to competing arguments and new evidence,
thereby improving the quality of their opinions and, 
down the line, the quality of the group decision.
Deliberation, in this view, is a catalyst for the \emph{wisdom of the crowds}: 
it makes it possible for relevant knowledge and evidence, 
which is otherwise scattered across people and institutions, 
to be pooled together and made available to everyone. 
In this way, when it comes time to vote, voters are better informed 
and more likely to make the right call. In other words, deliberation complements 
the aggregation of opinions with a collective learning process, 
or so the optimistic story goes.

Critics of deliberation point out that such positive effects occur 
only in idealized conditions, rarely met in practice, 
and that whatever benefit deliberation may bring, its impact is mild at best.
On the ground, discussion has been observed to amplify biases, 
increase polarization and steer groups 
\emph{away} from the truth rather than towards it 
\cite{Brennan2016t,Mendelberg2002p,Stasser1985q,Sunstein2002m,Sunstein2014h}.

There is, thus, a tension at the heart of 
the deliberation project, with plausible arguments showing that 
deliberation can go either inspiringly right or spectacularly wrong. 
What is needed is a clearer picture of the conditions and incentives that are likely to 
tip the outcome of deliberation in one direction or the other.
Our aim is to provide such a picture.

In this paper we study a formal model of deliberation intended 
to capture the dynamics of information flow in deliberative scenarios.
In our model a set of agents face a decision over two alternatives ($a$ and $b$) 
and selectively disclose information to the group, 
motivated by the desire to correct the collective opinion if they disagree with it.
We assume a ground truth ranking of the alternatives, 
determined by the total amount of evidence supporting each. 
Agents form an opinion based on private evidence: this opinion is based on
the evidence they have access to at the time, 
such that an agent may support $b$ because they happen to have more evidence for it, 
even though $a$ is, objectively, the better alternative;
thus, agents may end up sharing evidence for an alternative they earnestly believe to be better,
even though they are wrong.
Deliberation is the process of agents sharing evidence with each other
and updating their opinions based on newly acquired information. 
This is a process that takes place over several rounds, with a natural end point 
when no one has anything new to share, or when there is unanimous agreement.
We are interested in the accuracy of the group post-deliberation, 
i.e., the probability that the collective decision at the end 
of deliberation is the superior alternative. In other words, we want to know: does deliberation make
the group wiser?

\paragraph{Contributions.}
We put forward a probabilistic process of evidence acquisition 
(the initial opinions agents start out with),
and trace how this evidence gets re-distributed via the public layer of communication
represented by the deliberative process.
Our results concern the truth-tracking properties of the ensuing dynamics.
We derive a jury-like theorem, 
i.e., a results describing the outcome of deliberation with high probability, 
as the number of agents grows to infinity. The deliberation process, 
in our basic setup, turns out to exhibit an interesting pattern: an initial majority 
for the better alternative, followed by a reversal due to the influx of evidence for other 
alternative; if, at this point, the optimal alternative has enough supporters left standing, 
they will rally and are likely to persuade the rest of the group 
to switch back to the superior alternative. 
We show that this is because, in the long run, all evidence is likely to become public.
We also find this conclusion to be sensitive to the initial distribution of evidence,
and computer simulations allow us to get a more granular 
understanding of the deliberation dynamics,
notably on how the salience of evidence affects the outcome of deliberation.

\paragraph{Related Literature.}
There is a vast literature on the various benefits of deliberation. 
These include an improved sense of legitimacy in democratic institutions
\cite{Caluwaerts2016r,Cohen1997h,Dryzek2001m,Germann2024h,Goodin2006r}, 
reduced polarization \cite{Caluwaerts2023l,Fishkin2025f,Gronlund2015r},
and the transformation of participants' preferences to a 
form that is less prone to aggregation paradoxes 
\cite{Farrar2010r,List2013m,Miller1992d}.
Here we are interested in the \emph{epistemic} benefits, 
specifically the idea that deliberation works as a 
decentralized procedure for improving collective decisions
\cite{Estlund2008a,Hong2004u,Landemore2013z,Landemore2014p}.
One factor thought to help a group's epistemic performance is \emph{diversity},
sometimes framed as more important than individual ability 
\cite{Hong2004u,Luan2012x,Page2008k,Reijula2021y}.
Other formal treatments of deliberation have been offered 
\cite{Dietrich2025k,Ding2021d,Goel2025d,Goyal2026l,List2011j,Siebe2024j},
with \cite{Chingoma2023i} being the most similar to ours. 
Our work can be seen as complementing the
latter model with a probabilistic component in the acquisition of evidence,
which allows us to derive results about the likelihood of correct deliberation outcomes.

Of course, there already exists a large body of work that argues 
for the wisdom of crowds via the Condorcet Jury Theorem (CJT)
\cite{Dietrich2023h,Goodin2018r,Grofman1983e,List2001g}.
The CJT, however, does not model the information flow that 
makes opinions change, and thus captures 
only a snaphot of the broader dynamic process we are interested in here.
Nonetheless, we see the CJT as a starting point and template for the type of results 
(jury-like theorems) we seek for deliberation.

\paragraph{Outline.}
Section \ref{section:model} lays out the basic deliberation model,
including the probabilistic evidence sampling mechanism,
followed by our jury theorem in Section \ref{section:jury-theorem}
and a variation of the model to accommodate evidence
of varying salience in Section \ref{section:biased-sampling}
Section \ref{section:conclusion} offers conclusions and thoughts for future work.
Full proofs are moved to the Appendix.
The code for the simulations is available at 
{\footnotesize\texttt{https://github.com/adrianharet/wisdom-probabilistic-deliberation-code}}.

\section{Model}\label{section:model}
We assume a set $N = \{1, \dots, n\}$ of $n$ \emph{agents} facing a binary decision problem
over two \emph{alternatives}, $a$ and $b$. There is a set $E$ of $m$ available 
\emph{evidence items} partitioned into $A = \{a_1, \dots, a_\ell\}$ and 
$B = \{b_1, \dots, b_{m-\ell}\}$, with $A \cup B = E$ and $A \cap B = \emptyset$.
The evidence items in $A$ and $B$ support alternatives $a$ and $b$, respectively,
and the ground truth ranking of the alternatives is determined
by the number of evidence items supporting them.
Without loss of generality we assume that $\ell > \nicefrac{m}{2}$, 
i.e., $a$ is taken to be (objectively) better than $b$ 
because it is supported by more evidence.
Alternatives $a$ and $b$ can be thought of as different options 
(e.g., bulding a highway or a park, voting for one candidate or another),
or the pro- and con- sides of a singular issue; in this framing evidence items 
are facts, arguments, or pieces of information in support of one of the alternatives.

Each agent $i$ has access to a subset of the total evidence, 
called $i$'s \emph{private evidence}.
The private evidence is indexed across a series of \emph{rounds} $t \in \{0, 1, \dots\}$,
and we write $E_i^t$ for the \emph{private evidence of $i$ at $t$}, 
with $E_i^t \subseteq E$.
Additionally, $A^t_i = E^t_i \cap A$ and $B^t_i = E^t_i \cap B$ are $i$'s sets of 
private evidence for $a$ and $b$, respectively, at $t$.
At each round $t$ agents use their private evidence to form an opinion about the alternatives. 
We write $v_i^t \in \{a, b, \emptyset\}$ for $i$'s \emph{opinion}, or \emph{vote}, 
at round $t$, with $v_i^t = x$, for $x \in \{a, b\}$, 
if $i$ has more private evidence (at $t$) for $x$ than for the other alternative, 
and $v_i^t = \emptyset$ if there is a tie in $i$'s evidence amounts.
If $v_i^t = \emptyset$ we say $i$ is \emph{indifferent} at $t$.  

The opinions are aggregated into a collective decision,
and we write $\vote^t(n)$ for the 
\emph{collective decision at round $t$ over a group of $n$ voters}, 
with $\vote^t(n) = x$, for $x \in \{a, b\}$, if there are more votes for $x$ than for 
the other alternative, and $\vote^t(n) = \emptyset$ in case of a tie.
In other words, $\vote^t(n)$ is the majority vote at $t$ among the non-indifferent voters
(indifferent voters are treated as abstentions).

\emph{Deliberation} consists of agents comparing their own opinion with the collective vote
and selectively sharing evidence, using the following bookkeeping rules.
First, we say that agent $i \in N$ \emph{dissents} at $t \geq 0$ from the collective opinion if 
$\vote^t(n) = a$ and $v_i^t = b$, or $\vote^t(n) = b$ and $v_i^t = a$, 
i.e., if $i$'s opinion is different from the collective decision and $i$ 
is not indifferent (if the collective vote is tied we assume, for convenience, that there are no dissenters).
Agents also keep track of disclosed evidence through a set $\mathit{PE}^t \subseteq E$ of 
\emph{public evidence at $t$}, which contains all items of 
evidence shared (by any agent) up to, and including, round $t$. 
Initial evidence, i.e., evidence sets at $t=0$ and hence initial opinions, 
are given by a probabilistic process (described below).

A round $t \geq 1$ of deliberation proceeds according to the idea that 
dissenters want to correct what they perceive to be a mistake in the collective opinion, 
so they share any undisclosed evidence that supports their own current opinion.
More precisely: 
agents enter round $t \geq 1$ knowing their opinion and the collective decision at $t-1$; 
every dissenting agent $i$ discloses all private evidence that is not already public 
(i.e., not in $\mathit{PE}^{t-1}$) and that supports their own opinion at $t-1$;
the set $\mathit{PE}^t$ is obtained by adding to $\mathit{PE}^{t-1}$ 
all the newly disclosed evidence; 
all agents update their private evidence with the newly disclosed evidence;
each agent $i$ then updates their vote from $v_i^{t-1}$ to $v_i^t$ 
based on the updated private evidence; 
finally, the collective decision $\vote^t(n)$ is recomputed based on the updated votes.
All updates are performed synchronously, i.e., agents are assumed to share and update 
their information at the same time.

If there is a round at which no evidence disclosure occurs, 
then agents' opinions (and with them the collective decision) stay the same forever after.
We say that deliberation \emph{terminates} at this point, and write 
$\vote^*(n)$ for the \emph{collective decision} at termination.
Deliberation can terminate for one of two reasons: 
either all agents agree with the collective opinion, 
or there are some remaining dissenters but none of them has any new evidence to disclose.
Note that in this dynamics public evidence can only ever increase, 
so there must be a point at which deliberation terminates.
Note, as well, that, in our setting the deliberation outcome depends entirely 
on the initial evidence distribution: given initial evidence sets $E_i^0$, for $i \in N$, 
the deliberation process is deterministic.

So how is evidence acquired initially? 
We assume this happens through a probabilistic model
that approximates a trial-and-error learning process (e.g., searching online)
repeated $k$ times (the amount of effort agents are willing to invest), 
with each search yielding a new random item of evidence.
Concretely, we fix a parameter $k \in \{1, \dots, m\}$ and have each agent 
$i \in N$ independently sample $k$ items from $E$, without replacement. 
Intuitively, we imagine a bag of $m$ evidence items,
of which $\ell$ (the `successes') support $a$ and the remaining ones 
(the `failures') support $b$,
with each agent acquiring their initial private evidence 
by independently reaching into the bag and drawing $k$ items without replacement.

Thus, the number $|A_i^0|$ of evidence items for $a$ that $i$
draws initially is described by the hypergeometric distribution 
\cite{Lehmann2022a}:
$$
    \left| A_i^0 \right| \sim \mathrm{Hypergeometric}\big(m, \ell, k \big).
$$
with $m$ corresponding to the population size, 
$\ell$ corresponding to the successes, 
and $k$ the number of draws in each trial. 
Thus, the probability that $i$ initially draws exactly $j \in \{0, \dots, k\}$
evidence items for $a$ can be thought of as the number of 
observed successes out of $k$ draws:
$$
    {\Pr}\Big[ \left| A_i^0 \right| = j \Big] = 
    \frac{\binom{\ell}{j}\binom{m-\ell}{k-j}}{\binom{m}{k}}.
$$

\begin{figure}[t]
    \centering
    \begin{tikzpicture}[scale=0.8, transform shape]
        \draw[thick] (0, 0) -- (7,0);
        \node at (0.5, 3.5) {$t=0$};
        \node at (6, 3.5) {$\vote^0(3)=a$};

        \node at (1, -0.5) {$v_1^0 = a$};
        \node[a] at (0.5, 0.5) {$a_1$};
        \node[a] at (0.5, 1.5) {$a_2$};
        \node[b] at (1.5, 0.5) {$b_1$};

        \node at (3.5, -0.5) {$v_2^0 = a$};
        \node[a] at (3, 0.5) {$a_3$};
        \node[a] at (3, 1.5) {$a_4$};
        \node[b] at (4, 0.5) {$b_2$};

        \node at (6, -0.5) {$v_3^0 = b$};
        \node[a] at (5.5, 0.5) {$a_5$};
        \node[b] at (6.5, 1.5) {$b_3$};
        \node[b] at (6.5, 0.5) {$b_4$};

        \draw[thick,dashed] (7.25, -0.75) -- (7.25, 4);

        \draw[thick] (7.5, 0) -- (14.5,0);
        \node at (8, 3.5) {$t=1$};
        \node at (13.55, 3.5) {$\vote^1(3)=b$};

        \node at (8.5, -0.5) {$v_1^1 = b$};
        \node[a] at (8, 0.5) {$a_1$};
        \node[a] at (8, 1.5) {$a_2$};
        \node[b] at (9, 0.5) {$b_1$};
        \node[b] at (9, 1.5) {$b_3$};
        \node[b] at (9, 2.5) {$b_4$};

        \node at (11, -0.5) {$v_2^1 = b$};
        \node[a] at (10.5, 0.5) {$a_3$};
        \node[a] at (10.5, 1.5) {$a_4$};
        \node[b] at (11.5, 0.5) {$b_2$};
        \node[b] at (11.5, 1.5) {$b_3$};
        \node[b] at (11.5, 2.5) {$b_4$};

        \node at (13.5, -0.5) {$v_3^1 = a$};
        \node[a] at (13, 0.5) {$a_5$};
        \node[b] at (14, 0.5) {$b_3$};
        \node[b] at (14, 1.5) {$b_4$};
    \end{tikzpicture}
    \caption{%
        A deliberation instance with three agents. 
        Evidence items are shown as labeled blocks, 
        with $a$-evidence in green and $b$-evidence in red.
        During the first round, agent $3$ persuades agents 
        $1$ and $2$ to change their mind.
        There is more evidence for $a$ overall, but after one 
        round of deliberation the group goes from voting for $a$ 
        to unanimously supporting the inferior alternative $b$.
        }
        \label{fig:illustrative-deliberation-example}
\end{figure}

\begin{example}\label{ex:deliberation-example}
    Consider evidence sets $A = \{a_1, \dots, a_5\}$ and $B = \{b_1, \dots, b_4\}$,
    supporting alternatives $a$ and $b$, respectively,
    and three agents who draw three items of evidence each ($k=3$),
    and whose initial private evidence (at $t=0$) is depicted in 
    Figure \ref{fig:illustrative-deliberation-example}.
    The collective decision at $t=0$ is $\vote^0(3) = a$, 
    based on two votes for $a$ (agents $1$ and $2$) 
    and one vote for $b$ (agent $3$): the majority decision,
    at this point, is in favor of the superior alternative $a$.
    During the first round of deliberation, 
    agent $3$, who disagrees with the collective opinion
    and wants to correct a perceived error, 
    discloses their private evidence for $b$, i.e., evidence items $b_3$ and $b_4$.
    Agents $1$ and $2$ update their evidence and proceed to recalculate their beliefs.
    There is now unanimous agreement in favor of $b$! 
    Once such universal agreement is achieved
    we see no more disclosure (no one disagrees with the collective opinion,
    hence there is nothing to correct), and deliberation terminates with $b$ as the consensus decision. 
\end{example}

\noindent
Example \ref{ex:deliberation-example} shows that deliberation can take a correct majority 
and, worryingly, turn it into a unanimous 
decision for the inferior option: a motivated minority with a lot of evidence 
for their preferred alternative can persuade the rest of the group to their side;
and if this minority happens to be in favor of a bad choice, deliberation ends 
up amplifying the error, rather than correcting it.
What is more, previous work, both theoretical and experimental, 
has shown this can happen for a range of initial 
evidence distributions, as well as for alternative ways of organizing 
deliberation \cite{Chingoma2023i,Stasser1985q}.
It seems that there is ample space for deliberation to go wrong.
So is deliberation essentially harmful to the wisdom of crowds?
In the following sections we look at this question
by trying to understand not so much \emph{whether} things can go wrong (we know they can), 
but \emph{how likely} it is that they do.

\section{A Jury Theorem in Three Acts}\label{section:jury-theorem}
Our main result in this section is a jury-like theorem for the success of deliberation, 
obtained by collecting snapshots of the collective decision across 
different rounds of the deliberative process.
We are particularly interested in where deliberation ends up, 
i.e., in the probability ${\Pr}{\left[ \vote^*(n) = a \right]}$
of a correct majority vote at termination. 
The high-level argument of this section is that, as $n$ goes to infinity, 
the deliberation dynamics introduced in Section \ref{sec:model} 
plays out over at most three rounds.

\paragraph{The initial round.}
At $t=0$ each agent $i \in N$ forms an initial opinion 
based on the $k$ evidence items 
sampled independently from the bag of total evidence.
The probability of $i$ forming a correct opinion at this stage 
is the probability that $i$ samples (strictly) more evidence for $a$ than for $b$,
equivalent to $i$ sampling more than $\nicefrac{k}{2}$ items of evidence for $a$:
$$
    {\Pr}\Big[ v_i^0 = a\Big] = 
    {\Pr}{\left[ |A_i^0| > \frac{k}{2} \right]} = 
    \sum^k_{j = \lfloor \nicefrac{k}{2}\rfloor + 1}{\Pr}\Big[ |A_i^0| = j \Big].
$$
Also known as the \emph{survival function} of the hypergeometric distribution
(with threshold $\nicefrac{k}{2}$), 
we call this probability $i$'s \emph{initial accuracy}.
Since all agents draw evidence independently according to the same probabilistic model, 
everyone has the same initial accuracy, a quantity we denote by $p_a$.
Similarly, we write $p_b$ and $p_{\emptyset}$ for the probabilities
that an agent ends up voting for $b$ ($|A_i^0| < \nicefrac{k}{2}$) 
and, respectively, being indifferent ($|A_i^0| = \nicefrac{k}{2}$) at $t=0$.
These are the only possibilities, so $p_a + p_b + p_{\emptyset} = 1$.

The first step of our argument is to understand how 
$p_a$, $p_b$ and $p_\emptyset$ 
fluctuate depending on the parameters of the model.
Towards this, a useful concept is that of 
a \emph{median} $M$ of a random variable $X$, defined as a value such that
${\Pr}{\big[ X \leq M \big]} \geq \nicefrac{1}{2}$ and
${\Pr}{\big[ X \geq M \big]} \geq \nicefrac{1}{2}$.
For the hypergeometric distribution, medians can be bounded using 
the mean $\mu$ \cite{cooper_linearly_2010,siegel_median_2001}.

\begin{lemma}\label{lemma:hypergeometric-median-inequality}
    For a random variable $X$ with a hypergeometric distribution, 
    mean $\mu$ and median $M$, it holds that $\lfloor \mu \rfloor \leq M \leq \lceil \mu \rceil$. 
\end{lemma}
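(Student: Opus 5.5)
Since the paper cites this fact from the literature \cite{cooper_linearly_2010,siegel_median_2001}, the plan is to reconstruct a self-contained argument. It rests on two properties of the hypergeometric law $X \sim \mathrm{Hypergeometric}(m,\ell,k)$ with $\mu = k\ell/m$: (i) its probability mass function is log-concave, hence unimodal; and (ii) its mode lies within distance one of the mean. Both follow from the ratio
$$
    r(j) = \frac{\Pr[X=j+1]}{\Pr[X=j]} = \frac{(\ell-j)(k-j)}{(j+1)(m-\ell-k+j+1)},
$$
which is decreasing in $j$. So the pmf increases while $r(j) \geq 1$ and decreases afterwards. Solving $r(j)\ge 1$ gives the standard mode $\lfloor (k+1)(\ell+1)/(m+2) \rfloor$, which lies in $[\lfloor\mu\rfloor, \lceil\mu\rceil]$ after a short computation.

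The real content of the argument is the two tail inequalities
$$
    \Pr\big[X \leq \lceil \mu\rceil\big] \geq \tfrac12
    \quad\text{and}\quad
    \Pr\big[X \geq \lfloor \mu\rfloor\big] \geq \tfrac12 .
$$
These suffice. The first says $\lceil\mu\rceil$ has at least half the mass below it. The second says $\lfloor\mu\rfloor$ has at least half the mass above it. Together they force a median in $[\lfloor\mu\rfloor,\lceil\mu\rceil]$, and in fact every median lies there once the slack cases are handled: if $\Pr[X\le \lceil\mu\rceil-1]\ge 1/2$ held strictly, one would still need to rule out a median below $\lfloor\mu\rfloor$. That needs $\Pr[X\le\lfloor\mu\rfloor-1]<1/2$, which is the strict form of the second inequality. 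So I would prove the strict versions $\Pr[X\ge\lfloor\mu\rfloor]>1/2$ and $\Pr[X\le\lceil\mu\rceil]>1/2$. If $\mu$ is an integer, both are the statement that the mean is a median.

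I see two routes to the tail bounds.

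\textbf{Route 1 (the one I would try first).} Represent $X$ as a sum of independent Bernoulli variables. It is known (Vatutin--Mikhailov) that the hypergeometric pmf has a generating function with only real roots, so $X \overset{d}{=} \sum_i Y_i$ with $Y_i \sim \mathrm{Bernoulli}(q_i)$ independent. The theorem of Jogdeo and Samuels then states that for a Poisson-binomial variable whose mean $\mu$ satisfies $\lfloor\mu\rfloor\le M\le\lceil\mu\rceil$, the median lies between $\lfloor \mu\rfloor$ and $\lceil\mu\rceil$. This gives the lemma directly.

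\textbf{Route 2 (fallback, from scratch).} Use the integer-mean case plus a monotone coupling. When $\mu$ is an integer, show directly that $\Pr[X\le\mu]\ge1/2$ and $\Pr[X\ge\mu]\ge1/2$, using the symmetry $X \leftrightarrow k-X$ after swapping successes and failures, combined with unimodality. For non-integer $\mu$, couple $X$ with hypergeometrics having slightly more or fewer successes, $\ell' = \ell\pm 1$, whose means bracket integers. Then use that the hypergeometric is stochastically increasing in $\ell$.

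The main obstacle is the integer-mean case in Route 2, which is exactly the Jogdeo--Samuels / Kaas--Buhrman type statement "mean equals median." Unimodality alone does not give it. I would rely on the real-rootedness reduction of Route 1 and cite it, rather than attempt a bare-hands proof.
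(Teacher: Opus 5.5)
Your Route~1 is a sound proof, but it is not what the paper does. The paper gives no argument of its own and simply defers to Corollary~13 (of Theorem~12) in Cooper, which in turn rests on Theorem~2.4 of Siegel. You instead make the mechanism explicit by splitting the claim into two known facts. First, the hypergeometric generating function $\sum_j \Pr[X=j]z^j$ has only real zeros (Vatutin--Mikhailov), so $X$ is a sum of independent $\mathrm{Bernoulli}(q_i)$ variables. Second, for such sums with integer mean $\mu$ one has $\Pr[X\le\mu]>\nicefrac{1}{2}$ and $\Pr[X\ge\mu]>\nicefrac{1}{2}$ (Jogdeo--Samuels).

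This buys an explanation of \emph{why} the bound holds, plus two byproducts. Your strict tail bounds $\Pr[X\ge\lfloor\mu\rfloor]>\nicefrac{1}{2}$ and $\Pr[X\le\lceil\mu\rceil]>\nicefrac{1}{2}$ confine \emph{every} median, which handles the non-uniqueness of medians that the paper passes over. Moreover, these tail bounds in non-strict form are exactly what the proofs of Lemmas~\ref{lemma:round-0-agent-accuracies} and~\ref{lemma:round-1-agent-accuracies} actually use. The citation route buys brevity.

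Three things to tighten.

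\emph{The Jogdeo--Samuels step.} As written, your statement of the theorem is circular: its hypothesis already contains the conclusion $\lfloor\mu\rfloor\le M\le\lceil\mu\rceil$. Cite the integer-mean theorem and do the reduction yourself. Lower the parameters to $q_i'\le q_i$ with $\sum_i q_i'=\lfloor\mu\rfloor$, couple so that $X'\le X$, and conclude $\Pr[X\ge\lfloor\mu\rfloor]\ge\Pr[X'\ge\lfloor\mu\rfloor]>\nicefrac{1}{2}$. For the other bound, raise the parameters to total $\lceil\mu\rceil$. This is possible because there are $\min(k,\ell)\ge\lceil\mu\rceil$ factors.

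\emph{Route~2.} That continuous adjustment is exactly what Route~2 lacks. Inside the hypergeometric family the means $k\ell'/m$ move in steps of $\nicefrac{k}{m}$ and typically never hit an integer. For instance, with $m$ prime and $0<k<m$, $k\ell'/m\in\mathbb{Z}$ forces $\ell'\in\{0,m\}$. So the $\ell\pm 1$ coupling cannot reduce to the integer-mean case. Likewise, the swap $X\leftrightarrow k-X$ yields a hypergeometric with $m-\ell$ successes, a different law, so it says nothing about the tails of $X$ itself.

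\emph{Unused material.} The log-concavity and mode computation is not used anywhere in Route~1 and can be dropped.
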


\noindent
Lemma \ref{lemma:hypergeometric-median-inequality} allows us to 
estimate initial accuracy, presented in the next result. 
The result focuses on $k$ as, conceptually, we think this is the main
variable in a deliberation context: $\ell$ and $m$ are considered to be fixed,
while $k$ approximates the amount of effort agents make prior to forming an
initial opinion.     

\begin{lemma}\label{lemma:round-0-agent-accuracies}
If $\nicefrac{\ell}{m} > \nicefrac{1}{2}$, it holds that: 
\begin{itemize}
    \item[\textnormal{($\textsf{a}$)}] if $k \geq 2(m-\ell)$, then $p_b = 0$;
    \item[\textnormal{($\textsf{b}$)}] if $k < 2(m-\ell)$, then:
    \begin{itemize}
        \item[\textnormal{($\textsf{b}_1$)}] if $k$ is odd, 
        then $0 = p_\emptyset < p_b < \nicefrac{1}{2} < p_a < 1$;
        \item[\textnormal{($\textsf{b}_2$)}] if $k$ is even, 
        $\nicefrac{\ell}{m} > \nicefrac{1}{2} + \nicefrac{1}{m}$
        and $k > \left(\nicefrac{\ell}{m} - \nicefrac{1}{2}\right)^{-1}$,
        then $p_{\emptyset} > 0$ and $0 < p_b < \nicefrac{1}{2} \leq p_a < 1$.
    \end{itemize} 
\end{itemize} 
\end{lemma}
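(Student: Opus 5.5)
We treat $X = |A_i^0| \sim \mathrm{Hypergeometric}(m,\ell,k)$ with mean $\mu = k\ell/m > k/2$, and read off each claim from the support of $X$ and from Lemma \ref{lemma:hypergeometric-median-inequality}. The support of $X$ is $\{\max(0, k-(m-\ell)), \dots, \min(k,\ell)\}$. Every value in this range has positive probability.

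For part ($\textsf{a}$), $p_b = \Pr[X < k/2]$. If $k \geq 2(m-\ell)$, then $X \geq k-(m-\ell) \geq k - k/2 = k/2$ surely. Hence $p_b = 0$.

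For part ($\textsf{b}$), $k < 2(m-\ell)$ gives $k - (m-\ell) < k/2$. So some support value below $k/2$ exists, and $p_b > 0$. We also need $k \leq m$ and $\ell > m/2$: these give $\min(k,\ell) > k/2$ (either $k < \ell$, or $\ell > m/2 \geq k/2$). So some support value above $k/2$ exists, and $p_a > 0$.

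Moreover $p_a < 1$, since $p_b > 0$. Likewise $p_b < 1/2$ will follow from $p_a \geq 1/2$ together with $p_\emptyset > 0$ or $p_a > 1/2$.

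For ($\textsf{b}_1$), $k$ odd means $X = k/2$ is impossible, so $p_\emptyset = 0$. Since $\mu > k/2$ and $k/2$ is a half-integer, $\lfloor \mu \rfloor \geq (k+1)/2$. Lemma \ref{lemma:hypergeometric-median-inequality} then gives a median $M \geq (k+1)/2$. Hence $p_a = \Pr[X \geq (k+1)/2] \geq \Pr[X \geq M] \geq 1/2$.

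To get strictness we use that, for $k$ odd, $p_a + p_b = 1$ and $p_a \geq 1/2$. It then suffices to exclude $p_a = 1/2$. I expect this to be the main delicate step. One option is to use a strict version of the median bound. The other is an injection argument: the map $j \mapsto k-j$ sends the $b$-region to the $a$-region. The ratio $\Pr[X=k-j]/\Pr[X=j] = \binom{\ell}{k-j}\binom{m-\ell}{j}/\big(\binom{\ell}{j}\binom{m-\ell}{k-j}\big)$ is at least $1$ for $j < k/2$ when $\ell > m-\ell$, and strictly greater than $1$ for at least one $j$ with $\Pr[X=j] > 0$. I would actually use this symmetry argument as the primary route, since it gives $p_a > p_b$ directly for both parities. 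It reduces to the elementary inequality $\binom{\ell}{r}\binom{m-\ell}{s} \geq \binom{\ell}{s}\binom{m-\ell}{r}$ for $r > s$, which holds because $\binom{\ell}{r}/\binom{\ell}{s}$ is increasing in $\ell$.

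For ($\textsf{b}_2$), $k$ even: the hypothesis $k > (\ell/m - 1/2)^{-1}$ gives $\mu - k/2 = k(\ell/m - 1/2) > 1$. Hence $\lfloor \mu \rfloor \geq k/2 + 1$. The median lemma then yields $\Pr[X \geq k/2+1] \geq 1/2$, i.e.\ $p_a \geq 1/2$.

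It remains to show $p_\emptyset > 0$, i.e.\ that $k/2$ lies in the support. Here $k/2 \geq k - (m-\ell)$ holds by the case assumption, and $k/2 \leq \ell$ holds since $k \leq m < 2\ell$. The extra condition $\ell/m > 1/2 + 1/m$ guarantees that the bound $k > (\ell/m - 1/2)^{-1}$ is compatible with $k < 2(m-\ell)$ and $k \leq m$, so the case is nonvacuous.

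Finally, $p_b < 1/2$ follows from $p_b = 1 - p_a - p_\emptyset \leq 1/2 - p_\emptyset < 1/2$.
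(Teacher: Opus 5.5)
Your proposal is correct and follows the paper's own route. You use the support bound for ($\textsf{a}$); the pairing $j \leftrightarrow k-j$ with the binomial-ratio comparison, giving $p_a > p_b$ and hence $p_a > \nicefrac{1}{2}$ via $p_a + p_b = 1$, for ($\textsf{b}_1$); and the median lemma with $\lfloor \mu \rfloor \geq \nicefrac{k}{2}+1$ for ($\textsf{b}_2$). You are also more careful than the paper in two places. First, you claim only a termwise $\geq$, with strictness where ${\Pr}[X=j]>0$; the paper's termwise strict claim fails when both ${\Pr}[X=j]$ and ${\Pr}[X=k-j]$ vanish, e.g.\ for $j>\ell$ in its indexing. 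Second, you actually check $p_\emptyset > 0$ and $p_b < \nicefrac{1}{2}$ in ($\textsf{b}_2$).

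You should drop the median step you wrote first for ($\textsf{b}_1$). From $\mu > \nicefrac{k}{2}$ it does not follow that $\lfloor \mu \rfloor \geq (k+1)/2$: for $k=1$, $\mu = \nicefrac{\ell}{m} < 1$, so $\lfloor \mu \rfloor = 0$. The pairing argument is therefore what actually carries ($\textsf{b}_1$), not just an alternative route. Likewise, your remark that $\nicefrac{\ell}{m} > \nicefrac{1}{2} + \nicefrac{1}{m}$ makes ($\textsf{b}_2$) nonvacuous is false, since $m=10$, $\ell=7$ forces $5<k<6$. You never use that remark, so this is harmless.
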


\noindent
Intuitively, if $k \geq 2(m-\ell)$ there is simply not enough total evidence for $b$
to fill a $k$-item draw, and agents never vote for $b$.
If $k$ is odd there is no possibility of a tie in an agent's initial draw
(so agents can never be indifferent), and the overall evidence bias in favor of $a$
($\ell > \nicefrac{m}{2}$) is reflected in the draws.
If $k$ is even, the added possibility of indifferent agents 
pulls probability mass away from correct opinions, and we need 
stronger conditions for an initial vote for $a$: 
the bounds in Lemma \ref{lemma:round-0-agent-accuracies}$\textsf{b}_2$
push $\nicefrac{k}{2}$ below the mean $\lfloor \mu \rfloor$ of the hypergeometric
variables, and Lemma \ref{lemma:hypergeometric-median-inequality}
does the rest.

Thus, Lemma \ref{lemma:round-0-agent-accuracies} offers a (nearly) 
complete picture of initial accuracy as a function of $m$, $\ell$ and $k$. 
For an illustration see Figure \ref{figure:initial-accuracy};
note that accuracy there is: 
perfect for $k \geq 2(m-\ell) = 26$
(Lemma \ref{lemma:round-0-agent-accuracies}$\textsf{a}$);
above $\nicefrac{1}{2}$ for all odd $k$ 
(Lemma \ref{lemma:round-0-agent-accuracies}$\textsf{b}_1$);
drops for even $k$ (even below $\nicefrac{1}{2}$), 
and takes longer to get above $\nicefrac{1}{2}$, 
with the lower bound in 
Lemma \ref{lemma:round-0-agent-accuracies}$\textsf{b}_2$ shown as a vertical dashed line.

Individual accuracy is important mainly 
as a stepping stone on the way to collective accuracy, which 
we estimate next. Since ties in evidence amounts add case distinctions
but not significant insight, we tend to ignore them out by assuming $k$ is odd. 


\begin{proposition}\label{proposition:round-0-collective-outcome}
    If $\ell > \nicefrac{m}{2}$, it holds that:
    \begin{itemize}
        \item[\textnormal{($\textsf{a}$)}] if $k \geq 2(m-\ell)$, 
            then $\lim_{n \rightarrow \infty}{\Pr}\big[ \vote^0(n)  = a \big] = 1$
            and deliberation terminates at this round with $a$ as winner; 
        \item[\textnormal{($\textsf{b}$)}] if $k < 2(m-\ell)$ and $k$ is odd, then 
            $\lim_{n \rightarrow \infty}{\Pr}\big[ \vote^0(n)  = a \big] = 1$.
    \end{itemize}
\end{proposition}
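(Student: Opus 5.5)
The plan is to reduce both parts to a law-of-large-numbers statement about i.i.d. initial votes, with Lemma \ref{lemma:round-0-agent-accuracies} supplying the single-agent probabilities. Since agents sample independently from the same bag, the votes $v_1^0, \dots, v_n^0$ are i.i.d. For each agent set $X_i = 1$ if $v_i^0 = a$, $X_i = -1$ if $v_i^0 = b$, and $X_i = 0$ if $v_i^0 = \emptyset$. Then $\vote^0(n) = a$ holds exactly when $S_n = \sum_{i=1}^n X_i > 0$. Moreover, $\mathbb{E}[X_i] = p_a - p_b$.

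For part ($\textsf{a}$), Lemma \ref{lemma:round-0-agent-accuracies}$\textsf{a}$ gives $p_b = 0$, so no agent votes $b$. It remains to check that someone votes $a$ with probability tending to one. Here $k \geq 2(m-\ell)$ means $k - |A_i^0| \leq m-\ell \leq k/2$, so $p_a > 0$. We have $p_\emptyset < 1$: a tie forces $k = 2(m-\ell)$ and all of $B$ drawn, and even then a tie is not certain since $\ell > m-\ell$. Hence ${\Pr}[\vote^0(n) = a] \geq 1 - p_\emptyset^n \to 1$.

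For termination, note that on the event $\vote^0(n) = a$ every non-indifferent agent votes $a$. So there are no dissenters, nothing is disclosed at round $1$, and deliberation stops with $\vote^*(n) = a$.

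For part ($\textsf{b}$), Lemma \ref{lemma:round-0-agent-accuracies}$\textsf{b}_1$ gives $p_\emptyset = 0$ and $p_b < \nicefrac{1}{2} < p_a$. Thus $\delta := \mathbb{E}[X_i] = p_a - p_b > 0$, and each $X_i \in \{-1, 1\}$. By Hoeffding's inequality (or simply Chebyshev, since the variance is bounded by $1$),
$$
{\Pr}[S_n \leq 0] \leq {\Pr}\big[S_n - n\delta \leq -n\delta\big] \leq \exp\!\left(-\frac{n\delta^2}{2}\right),
$$
which tends to $0$. Equivalently, this is the classical Condorcet jury theorem applied with competence $p_a > \nicefrac{1}{2}$.

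There is no real obstacle, since the heavy lifting is in Lemma \ref{lemma:round-0-agent-accuracies}. The only point needing care is the edge case $k = 2(m-\ell)$ in part ($\textsf{a}$), where ties are possible. That is why I bound via $p_\emptyset < 1$ rather than assuming $p_a = 1$.
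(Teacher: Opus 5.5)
Your proposal is correct and follows the paper's route. For part ($\textsf{a}$) you combine $p_b=0$ from Lemma~\ref{lemma:round-0-agent-accuracies}($\textsf{a}$) with the vanishing probability that every agent is indifferent, and note that there are then no dissenters, so deliberation stops at once; for part ($\textsf{b}$) you combine $p_\emptyset=0$ and $p_a>\nicefrac{1}{2}$ from ($\textsf{b}_1$) with a law-of-large-numbers (Condorcet) argument, which you make quantitative via ${\Pr}[\vote^0(n)=a]\geq 1-p_\emptyset^n$ and a Hoeffding bound, and you also justify $p_\emptyset<1$, which the paper leaves implicit (though $k-|A_i^0|\leq m-\ell\leq \nicefrac{k}{2}$ by itself only re-derives $p_b=0$; it is your tie argument that yields $p_a=1-p_\emptyset>0$).
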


\noindent
Intuitively, if $k \geq 2(m-\ell)$, then no one votes for $b$ 
(Lemma \ref{lemma:round-0-agent-accuracies}$\textsf{a}$);
so the collective decision is $a$; 
so no one dissents and deliberation terminates. 
For odd $k < 2(m-\ell)$ we get that 
$p_a > \nicefrac{1}{2}$ (Lemma \ref{lemma:round-0-agent-accuracies}$\textsf{b}_1$),
which means that, as $n \rightarrow \infty$, we can be certain of a majority for $a$.   
Proposition \ref{proposition:round-0-collective-outcome}$\textsf{b}$
is essentially the Condorcet Jury Theorem (CJT), showing up here 
as a special case of our model: the collective decision pre-deliberation, 
i.e., before people engage in any sort of communication.
This case can be thought of as a baseline for the success of deliberation.
An illustration of the result is shown in Figure \ref{figure:collective-accuracy} for $t=0$,
where the collective accuracy is virtually perfect for all odd $k$ below $2(m-\ell) = 80$.

\begin{figure}[t]
    \centering
    \begin{subfigure}[b]{0.49\textwidth}
        \includegraphics[width=\textwidth]{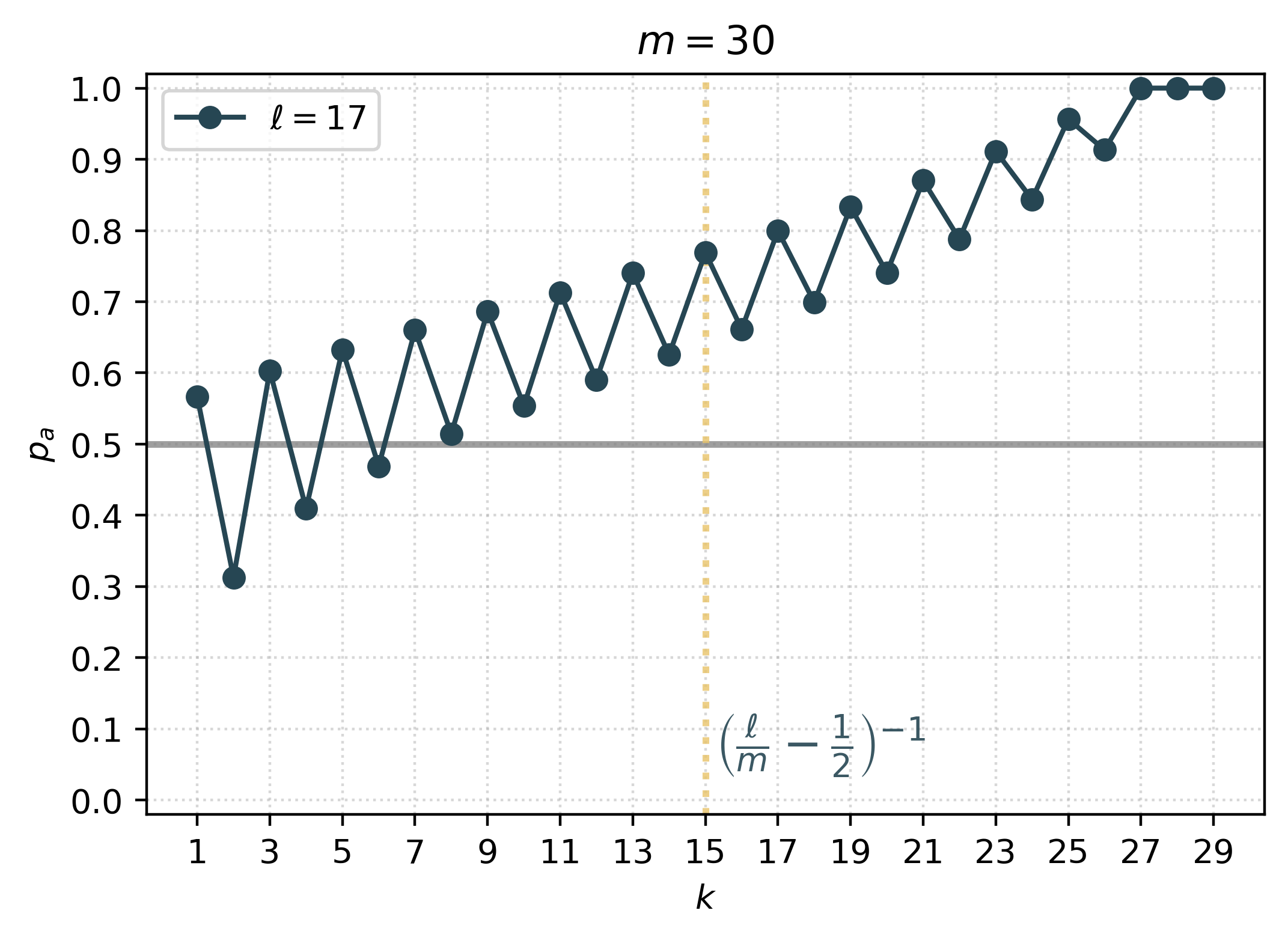}
        \caption{%
            Initial accuracy $p_a$ versus $k$ for $m=30$ and $\ell = 17$. 
            The crucial level $\nicefrac{1}{2}$ is shown as a horizontal line, 
            and the lower bound for even $k$ from
            Lemma \ref{lemma:round-0-agent-accuracies}($\textsf{b}_2$) 
            is shown as a vertical dashed line.  
            }
        \label{figure:initial-accuracy}
    \end{subfigure}
    \begin{subfigure}[b]{0.49\textwidth}
        \includegraphics[width=\textwidth]{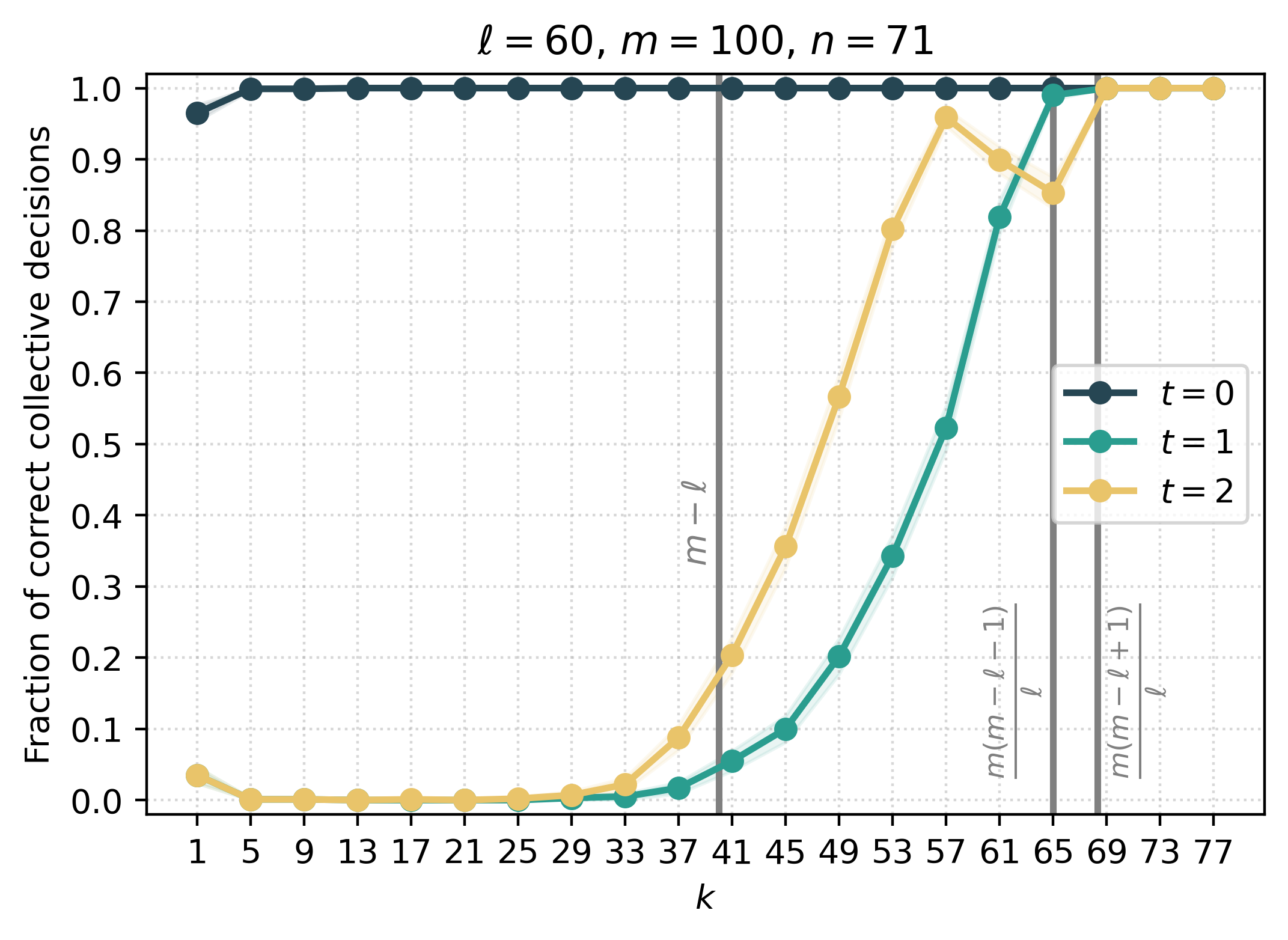}
        \caption{%
            Fraction of correct collective decisions 
            at rounds $t \in \{0, 1, 2\}$ versus $k$
            for $m=100$, $\ell = 60$ and $n=71$.
            The bounds obtained in 
            Propositions \ref{proposition:round-1-collective-outcome}
            and \ref{proposition:round-2-collective-outcome} 
            are shown as vertical dashed lines.
            }
        \label{figure:collective-accuracy}
    \end{subfigure}
    \caption{%
        Initial accuracy (on the left) as a plot of 
        the survival function of the hypergeometric distribution;
        and collective accuracy (on the right)
        as the fraction of correct decisions across 1000 simulations.
    }
    \label{figure:individual-and-collective-accuracy}
\end{figure}

\paragraph{Round one.}
Note that if $k \geq 2(m-\ell)$ there is no round one 
(Proposition \ref{proposition:round-0-collective-outcome}$\textsf{a}$),
so the following arguments work under the assumption that 
$k < 2(m-\ell)$ (also that $k$ is odd).

As with the previous round, our first aim is to estimate the probabilities 
of voting for $a$, $b$ and $\emptyset$ at the end of the round.
Note that, for an agent $i$, the probability of making a particular choice when round one ends
depends on $i$'s own, as well as \emph{everyone else's} initial evidence draw:
agent $i$ begins round one with their opinion from the previous round, but they might change it 
as other agents disclose some of their own private evidence.
Since we are interested in the limit behavior as $n$ grows to infinity, we
write $p'_a$, $p'_b$ and $p'_\emptyset$ for the probabilities
of $i$ voting for $a$, $b$ and $\emptyset$, respectively, at the end of round $1$, 
in the limit as $n$ grows to infinity:
$$
    p'_a = \lim_{n \rightarrow \infty}{\Pr}\big[ v^1_i = a \big], 
    \quad 
    p'_b = \lim_{n \rightarrow \infty}{\Pr}\big[ v^1_i = b \big], 
    \quad 
    p'_\emptyset = \lim_{n \rightarrow \infty}{\Pr}\big[ v^1_i = \emptyset \big].
$$
\noindent
Note that these probabilities still depend only on the initial distribution of evidence
and are the same for each agent (hence the ommission of an agent index).

Assuming, then, that we have made it to this point, 
let us take stock of where we are at the beginning of round one.
Under our assumptions, when round $t=1$ starts
there are ($i$) no indifferent agents, and
($ii$) (likely) \emph{some} votes for $b$ 
($p_b > 0$, by Lemma \ref{lemma:round-0-agent-accuracies}$\textsf{b}_1$),
with the exact number of $b$-voters being random but converging to $p_b n$
as $n \rightarrow \infty$ (the Law of Large Numbers).
Recall, as well, that $p_a > p_b$ and that a majority for $a$ is overhwelmingly likely at 
$t=0$ (Proposition \ref{proposition:round-0-collective-outcome}$\textsf{b}$).

Thus, round one very likely starts with $a$ as the reigning champion (decided at the previous round)
in a group consisting of a majority of $a$-supporters and a minority of dissenting $b$-supporters: 
as $n \rightarrow \infty$, we can effectively take for granted a collective vote for $a$ at $t=0$.
More formally, the probability that voter $i$ votes for alternative $x \in \{a, b, \emptyset\}$ 
at $t=1$ can be conditionalized on the collective vote at $t=0$:
\begin{align*}
    {\Pr}\Big[ v^1_i = x \Big] & = 
    {\Pr}\Big[ v^1_i = x \mid \vote^0(n) = a \Big] \cdot \underbrace{{\Pr}\Big[ \vote^0(n) = a \Big]}_{\rightarrow 1, \text{ as } n \rightarrow\infty} + \\
    & \qquad\qquad {\Pr}\Big[ v^1_i = x \mid \vote^0(n) = b \Big] \cdot \underbrace{{\Pr}\Big[ \vote^0(n) = b \Big]}_{\rightarrow 0, \text{ as } n \rightarrow\infty}.
\end{align*}
By Proposition \ref{proposition:round-0-collective-outcome}$\textsf{b}$, 
the probability of an initial majority for $a$ goes to $1$ as $n$ grows to infinity, 
so the second term above (the probability of an initial majority for $b$) becomes $0$.
Thus, the probability that $i$ votes for $x$ at the end of the first round is dominated by the first term,
and we can continue the argument under the assumption of an initial majority for $a$.

In this setup, the minority of $b$-supporters, thinking that a victory for $a$ is a grave error, 
aim to set the matter straight by disclosing all their evidence for $b$ during round one.
How much evidence for $b$ is this?
The key observation is that, as $n \rightarrow \infty$, 
this dissenting minority is very likely to hold, between them, 
\emph{all the existing evidence for $b$}. 
To state this formally, we write $N_a$ and $N_b$ for 
the sets of agents who end up voting for $a$ and $b$, respectively, at $t=0$.
Note that $N_a \cup N_b = N$ (no indifferent agents at $t=0$).

\begin{lemma}\label{lemma:initial-minority-covers-b-evidence}
    If $\ell > \nicefrac{m}{2}$ and $k$ is odd,
    $\lim_{n \rightarrow \infty}{\Pr}{\left[ \bigcup_{i \in N_b} B^0_i = B \right]} = 1$.
\end{lemma}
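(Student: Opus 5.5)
We prove the statement by a union bound over the finitely many $b$-items, so that it suffices to show that each fixed item $b_j \in B$ is held by some $b$-voter with probability tending to $1$. The complement of the event $\bigcup_{i \in N_b} B^0_i = B$ is the event that some $b_j \in B$ is missed by every member of $N_b$. Hence
$$
    {\Pr}{\left[ \bigcup_{i \in N_b} B^0_i \neq B \right]} \leq \sum_{j=1}^{m-\ell} {\Pr}{\left[ \forall i \in N :\ \neg\big( v_i^0 = b \wedge b_j \in B_i^0 \big) \right]}.
$$
Since $m-\ell$ does not depend on $n$, it is enough to show that each summand tends to $0$.

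For a fixed item $b_j$, let $q_j = {\Pr}[ v_i^0 = b \wedge b_j \in B_i^0 ]$. Agents sample independently and identically, so the events inside the summand are independent across $i$ and have the same probability. The $j$-th summand therefore equals $(1-q_j)^n$, which tends to $0$ as soon as $q_j > 0$. The whole lemma thus reduces to a single-agent, $n$-free fact: $q_j > 0$ for every $j$.

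The main obstacle is establishing $q_j > 0$, and here the oddness of $k$ and the regime $k < 2(m-\ell)$ enter. This is the regime in which round one exists. If $k \geq 2(m-\ell)$, then $p_b = 0$ by Lemma~\ref{lemma:round-0-agent-accuracies}($\textsf{a}$) and the statement would fail, so I will work under $k < 2(m-\ell)$ as the surrounding text assumes, noting this in the proof.

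To show $q_j > 0$, I exhibit one $k$-subset of $E$ that contains $b_j$ and strictly more $b$-items than $a$-items. Since $k$ is odd, set $r = (k+1)/2$, so $r > k/2$. Since $k < 2(m-\ell)$ and $k$ is odd, we have $k+1 \leq 2(m-\ell)$, so $r \leq m-\ell$ and there are enough $b$-items. Also $k - r = (k-1)/2 \leq \ell$, so there are enough $a$-items. Choose $r$ items from $B$ including $b_j$, and $k-r$ items from $A$. This draw occurs with probability $1/\binom{m}{k} > 0$ and makes the agent a $b$-voter holding $b_j$. Hence $q_j \geq 1/\binom{m}{k} > 0$, which completes the argument.

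Alternatively, one can use symmetry: by exchangeability of the $b$-items, $q_j = p_b \cdot \mathbb{E}\big[|B_i^0| \mid v_i^0=b\big]/(m-\ell)$. This is positive because $p_b > 0$ by Lemma~\ref{lemma:round-0-agent-accuracies}($\textsf{b}_1$).
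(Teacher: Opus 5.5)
Your proof is correct. It shares the paper's skeleton, a union bound over the $m-\ell$ items of $B$ combined with independence across agents, but the bookkeeping differs in a way that matters.

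The paper conditions on $n_b = s$ and bounds the miss probability for fixed $s$ by $(m-\ell)(1-\nicefrac{k}{m})^s$. It then averages this against the binomial law of $n_b$ to obtain $(m-\ell)(1-p_b\nicefrac{k}{m})^n$. You instead apply the union bound first and work with the per-agent joint event $\{v_i^0=b,\ b_j\in B_i^0\}$, which is i.i.d. across agents. Each summand is then exactly $(1-q_j)^n$, and you never have to reason about draws conditioned on membership in the random set $N_b$.

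This makes your version more rigorous. The paper's step $\prod_{i\in N_b}(1-\Pr[b_j\in B_i^0])=(1-\nicefrac{k}{m})^s$ plugs in the unconditional inclusion probability for agents who are conditioned to be $b$-voters. That silently treats an agent's vote and their holding $b_j$ as independent. In fact they are positively correlated, so the paper's equality is really an inequality, and it goes in the helpful direction only for a reason the paper never states.

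Your symmetry remark supplies exactly that reason. It gives $q_j=p_b\,\mathbb{E}[|B_i^0|\mid v_i^0=b]/(m-\ell)\geq p_b\frac{k+1}{2(m-\ell)}>p_b\frac{k}{m}$, using $2(m-\ell)<m$. This recovers, and even improves, the paper's explicit rate. Your combinatorial bound $q_j\geq 1/\binom{m}{k}$ suffices for the limit but yields a much weaker rate.

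You are also right to add the hypothesis $k<2(m-\ell)$. The paper's proof needs it too, since it uses $1-(1-p)\nicefrac{k}{m}<1$, i.e.\ $p_b>0$, but the lemma statement omits it. For odd $k>2(m-\ell)$ with $B\neq\emptyset$, the set $N_b$ is empty and the claim fails.
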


\noindent
Intuitively, this holds because, as $n$ grows
so does the number $|N_b|$ of $b$ voters 
(recall that $|N_b|$ is approximately $p_bn$, with $p_b > 0$);
thus, the probability that some item of evidence for $b$ is 
\emph{not} in any of the $B_i$'s for $i \in N_b$ decreases
and, in the limit, is 0.

The immediate effect of this disclosure event is that, once it happens,
every agent ends up knowing all $m-\ell$ evidence items for $b$.
From here, the question is whether enough $a$-supporters change their mind
to overturn the initial majority for $a$.
This, now, boils down to whether agents start out (at $t=0$)
with enough evidence for $a$ to withstand the shock of the disclosure of all $b$-evidence
at $t=1$:
$$
    p'_a = {\Pr}\Big[ |A_i^0| > m-\ell \Big], \quad
    p'_b = {\Pr}\Big[ |A_i^0| < m-\ell \Big], \quad
    p'_\emptyset = {\Pr}\Big[ |A_i^0| = m-\ell \Big].
$$
Finally, something we can calculate:
we are back to the survival function of the hypergeometric distribution, 
now with threshold $m-\ell$ instead of $\nicefrac{k}{2}$.

\begin{lemma}\label{lemma:round-1-agent-accuracies}
    If $\ell > \nicefrac{m}{2}$ and $k$ is odd, it holds that:
    \begin{itemize}
        \item[\textnormal{($\textsf{a}$)}] if $k \leq m-\ell$, then 
            $p'_a = 0$;
        
        \item[\textnormal{($\textsf{b}$)}] if $m-\ell < k < \nicefrac{m(m-\ell-1)}{\ell}$, 
            then
            $0 < p'_a < \nicefrac{1}{2} \leq p'_b < 1$
        \item[\textnormal{($\textsf{c}$)}] if $\nicefrac{m(m-\ell+1)}{\ell} < k < 2(m-\ell)$, 
            then
            $0 < p'_b < \nicefrac{1}{2} \leq p'_a < 1$.
    \end{itemize}
\end{lemma}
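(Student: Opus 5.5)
Throughout, write $X = |A_i^0| \sim \mathrm{Hypergeometric}(m,\ell,k)$, with mean $\mu = \nicefrac{k\ell}{m}$. Then $p'_a = \Pr[X > m-\ell]$ and $p'_b = \Pr[X < m-\ell]$. The plan mirrors the proof of Lemma \ref{lemma:round-0-agent-accuracies}: compare the threshold $m-\ell$ with the median of $X$, located via Lemma \ref{lemma:hypergeometric-median-inequality}, and handle strict bounds by support arguments. The support of $X$ is $\{\max(0,k-(m-\ell)),\dots,\min(k,\ell)\}$. Since $k < 2(m-\ell) < m$ and $\ell > m-\ell$, every value in this range has positive probability.

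For part (\textsf{a}): if $k \leq m-\ell$ then $X \leq k \leq m-\ell$ always, so $\Pr[X > m-\ell] = 0$.

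For part (\textsf{b}), first get the strict inequalities $0 < p'_a$ and $p'_b < 1$. Since $k > m-\ell$ and $\min(k,\ell) \geq m-\ell+1$ (because $\ell > m-\ell$), the value $m-\ell+1$ lies in the support. Next, $k < \nicefrac{m(m-\ell-1)}{\ell}$ gives $\mu < m-\ell-1$. Hence $\lceil \mu \rceil \leq m-\ell-1$, and Lemma \ref{lemma:hypergeometric-median-inequality} gives a median $M \leq m-\ell-1$. This yields
\[
\Pr[X \leq m-\ell-1] \geq \Pr[X \leq M] \geq \nicefrac{1}{2},
\]
i.e.\ $p'_b \geq \nicefrac{1}{2}$. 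Then $p'_a < \nicefrac{1}{2}$ holds because $p'_a \leq 1 - p'_b - p'_\emptyset$ and $p'_\emptyset = \Pr[X = m-\ell] > 0$. That $m-\ell$ is in the support needs $\max(0,k-(m-\ell)) \leq m-\ell$, i.e.\ $k \leq 2(m-\ell)$, which holds because $\nicefrac{m(m-\ell-1)}{\ell} < 2(m-\ell)$. Finally, $p'_a < \nicefrac12 \leq p'_b$, and $p'_b < 1$ was shown above.

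Part (\textsf{c}) is symmetric. The condition $k > \nicefrac{m(m-\ell+1)}{\ell}$ gives $\mu > m-\ell+1$, so $\lfloor \mu \rfloor \geq m-\ell+1$. Lemma \ref{lemma:hypergeometric-median-inequality} then gives a median $M \geq m-\ell+1$, so $p'_a = \Pr[X \geq m-\ell+1] \geq \nicefrac{1}{2}$. Because $k < 2(m-\ell)$, the lower end of the support is $\max(0, k-(m-\ell)) < m-\ell$. Hence $p'_b > 0$, and $p'_\emptyset > 0$ since $m-\ell$ is in the support. This gives $p'_a < 1$ and $p'_b \leq 1 - p'_a - p'_\emptyset < \nicefrac{1}{2}$.

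The main obstacle is bookkeeping rather than any deep step. The boundary inequalities, together with the assumptions $k$ odd and $\ell > \nicefrac{m}{2}$, must guarantee that the relevant support points $m-\ell-1$, $m-\ell$ and $m-\ell+1$ actually carry positive mass, since this is what produces the strict inequalities. The median lemma only controls a median, not every quantile, so each strict bound has to come from such a support argument. I would state the support of $X$ explicitly once at the start and then check each of these points in turn.
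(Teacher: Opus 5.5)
Your proof is correct and follows the paper's argument. In both, Lemma \ref{lemma:hypergeometric-median-inequality} gives the $\geq \nicefrac{1}{2}$ bounds: via $\lceil \mu \rceil \leq m-\ell-1$ in (\textsf{b}) and $\lfloor \mu \rfloor \geq m-\ell+1$ in (\textsf{c}). The strict inequalities come from $p'_\emptyset > 0$ and from positivity of the opposite vote. Your explicit support bookkeeping, such as checking that $m-\ell$ lies in the support so that $p'_\emptyset > 0$, fills in details the paper only asserts. It also avoids the paper's typo ``$k < m-\ell$'' (it should read $k > m-\ell$) in the step showing $p'_a > 0$.
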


\noindent
Intuitively, if $k \leq m-\ell$, then at $t=0$ no agent has enough evidence for $a$ 
to withstand the $m-\ell$ evidence items for $b$ at $t=1$, so a vote for $a$ is out of the question.
If $k > m-\ell$, then some lucky agents who start out with a lot of evidence for $a$ can still believe $a$ 
even after absorbing all the $b$-evidence. Note that, on average, an agent starts with $k \cdot \nicefrac{\ell}{m}$
items of evidence for $a$ (the expected value of the hypergeometric variable): comparing this to the threshold $m-\ell$ 
gives us the bounds in Lemma \ref{lemma:round-1-agent-accuracies}$\textsf{b}$ and \ref{lemma:round-1-agent-accuracies}$\textsf{c}$
(formally, we use Lemma \ref{lemma:hypergeometric-median-inequality} again).
Note that for $k$ between $\nicefrac{m(m-\ell-1)}{\ell}$ and $\nicefrac{m(m-\ell+1)}{\ell}$ things are less clear: 
this is a transition area with a higher probability of ties, 
where our bounds are not tight enough to determine the outcome with certainty.

The outcome at the end of round $1$ falls out immediately 
from Lemma \ref{lemma:round-1-agent-accuracies}.

\begin{proposition}\label{proposition:round-1-collective-outcome}
    If $\ell > \nicefrac{m}{2}$ and $k$ is odd, it holds that:
    \begin{itemize}
        \item[\textnormal{($\textsf{a}$)}] if $k \leq m-\ell$, 
            then $\lim_{n \rightarrow \infty}{\Pr}\big[ \vote^1(n)=b \big] = 1$
            and deliberation terminates at this round with $b$ as winner; 
        \item[\textnormal{($\textsf{b}$)}] if $m-\ell < k < \nicefrac{m(m-\ell-1)}{\ell}$, 
            then $\lim_{n \rightarrow \infty}{\Pr}\big[ \vote^1(n) = b \big] = 1$.
        \item[\textnormal{($\textsf{c}$)}] if $\nicefrac{m(m-\ell+1)}{\ell} < k < 2(m-\ell)$, 
            then $\lim_{n \rightarrow \infty}{\Pr}\big[ \vote^1(n) = a \big] = 1$, 
            and, additionally, deliberation terminates at this round with $a$ as winner.
    \end{itemize}
\end{proposition}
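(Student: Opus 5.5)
We condition on the high-probability event $\mathcal{E}_n$ on which, at $t=0$, the collective vote is $a$ and the $b$-supporters jointly hold all of $B$. By Proposition~\ref{proposition:round-0-collective-outcome}$\textsf{b}$ and Lemma~\ref{lemma:initial-minority-covers-b-evidence}, together with a union bound, $\Pr[\mathcal{E}_n]\to 1$. Hence it suffices to show that the desired round-one event holds with probability tending to $1$ conditional on $\mathcal{E}_n$.

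On $\mathcal{E}_n$ the dynamics of round one are deterministic given the initial draws. Every $b$-voter dissents and discloses all of its $b$-evidence, so $\mathit{PE}^1\supseteq B$. No $a$-voter dissents, so no $a$-evidence is disclosed and $\mathit{PE}^1=B$. Each agent $i$ therefore holds $A_i^0\cup B$ after the round, and $v_i^1$ is determined solely by comparing $|A_i^0|$ with $m-\ell$. The votes $v_i^1$ are thus functions of the independent variables $|A_i^0|$, and on $\mathcal{E}_n$ they are i.i.d.\ with probabilities $p'_a,p'_b,p'_\emptyset$. Conditioning does not really break independence: we can define the auxiliary i.i.d.\ votes $w_i=\mathrm{sgn}(|A_i^0|-(m-\ell))$, which coincide with $v_i^1$ on $\mathcal{E}_n$, and argue about the $w_i$ unconditionally. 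Then $\Pr[\vote^1(n)=x]\ge \Pr[\text{majority of } w_i \text{ is } x]-\Pr[\mathcal{E}_n^c]$.

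Each case now follows from the weak law of large numbers applied to the $w_i$, combined with Lemma~\ref{lemma:round-1-agent-accuracies}.
\begin{itemize}
\item[($\textsf{a}$)] If $k\le m-\ell$, then $|A_i^0|\le k\le m-\ell$. So $p'_a=0$, and no agent votes $a$. We have $p'_b>0$, since $p'_b\ge p_b>0$ (any initial $b$-voter has $|A_i^0|<k/2<m-\ell$). Hence at least one $b$-vote exists with probability tending to $1$, and $\vote^1(n)=b$. At this point there are no dissenters ($a$-votes are absent), so deliberation terminates.
\item[($\textsf{b}$)] We have $p'_b\ge 1/2>p'_a$. We need the number of $b$-votes to exceed the number of $a$-votes, and the law of large numbers gives this since $p'_b-p'_a>0$. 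The subtlety is the case $p'_b=1/2$, but the strict inequality $p'_b>p'_a$ is all that matters, because indifferent agents abstain.
\item[($\textsf{c}$)] Symmetrically, $p'_a\ge 1/2>p'_b$, so $\vote^1(n)=a$ with probability tending to $1$.
\end{itemize}

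The main obstacle is termination in ($\textsf{c}$). At round two the dissenters are the $b$-voters, and they may only disclose undisclosed evidence supporting $b$. But all of $B$ is already public in $\mathit{PE}^1$, so nothing new is disclosed, and deliberation terminates with $a$. This needs to be phrased carefully on $\mathcal{E}_n\cap\{\vote^1(n)=a\}$, whose probability tends to $1$. For ($\textsf{a}$), termination likewise follows because a unanimous (or dissent-free) outcome produces no disclosure.
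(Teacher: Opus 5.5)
Your proposal is correct and follows the paper's route. On the high-probability event that $a$ wins at $t=0$ and the $b$-minority jointly holds, and hence discloses, all of $B$, each $v_i^1$ is decided by comparing $|A_i^0|$ with $m-\ell$; Lemma~\ref{lemma:round-1-agent-accuracies} plus the law of large numbers then gives the collective outcome. Termination comes from the absence of $a$-voters in ($\textsf{a}$) and from the exhaustion of undisclosed $b$-evidence in ($\textsf{c}$).

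Your coupling with the i.i.d.\ variables $w_i$ and the bound $\Pr[\vote^1(n)=x]\ge\Pr[\text{majority of the } w_i \text{ is } x]-\Pr[\mathcal{E}_n^c]$ goes beyond the paper, which only sketches this conditioning step in the main text. It is a rigorous version of that step, and it also repairs your own loose remark that the $v_i^1$ are i.i.d.\ ``on $\mathcal{E}_n$''.
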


\noindent
The reasoning here is straightforward: if $k \leq m-\ell$, there is no chance of any vote for $a$,
so $b$ wins unopposed and deliberation ends. If $k$ is between $m-\ell$ and $\nicefrac{m(m-\ell-1)}{\ell}$, 
then at the end of round one the collective decision flips to $b$, and deliberation continues. 
If $k$ is above $\nicefrac{m(m-\ell+1)}{\ell}$, then the collective decision remains $a$ 
(with high probability most agents start with enough evidence for $a$ to remain loyal to $a$ even after absorbing all $b$-evidence), 
the $b$-supporters have exhausted their ammunition and deliberation ends.
This behavior can be seen in Figure \ref{figure:collective-accuracy}, where collective accuracy at $t=1$ starts
picking up only for $k$ above $m-\ell$, and becomes virtually perfect only for $k$ above $\nicefrac{m(m-\ell+1)}{\ell}$.
Note that the transition here is less sharp than the theoretical result suggests:
an effect of using simulations with finite $n$ (Proposition \ref{proposition:round-1-collective-outcome} holds for $n \rightarrow \infty$).

\paragraph{Round two.}
On average, round two takes place if $m-\ell < k < \nicefrac{m(m-\ell-1)}{\ell}$ (Proposition \ref{proposition:round-1-collective-outcome}),
so this is the case we focus on.
We write $p''_a$, $p''_b$ and $p''_\emptyset$ for the probabilities 
of voting for $a$, $b$ and $\emptyset$, respectively, at the end of round $2$, again in the limit as $n$ grows to infinity.
At this point, the group has seen (with high probability) a reversal in the collective decision:
$a$ at $t=0$, then $b$ at $t=1$, induced by the disclosure of all $b$-evidence.
If $k$ is large enough (above $\nicefrac{m(m-\ell-1)}{\ell}$ as per Lemma \ref{lemma:round-1-agent-accuracies}$\textsf{b}$), 
there are still some agents that hold on to their opinion for $a$ at the end of $t=1$ ($p'_a > 0$).
These agents are now in the minority, so they are motivated to disclose for $a$.
By the same argument as in Lemma \ref{lemma:initial-minority-covers-b-evidence},
though applied to $a$-supporters, this minority holds all the evidence for $a$ (as $n \rightarrow \infty$).
Once they disclose it, it is game over.

\begin{lemma}\label{lemma:round-2-agent-accuracies}
    If $\ell > \nicefrac{m}{2}$, $k$ is odd and 
    $m-\ell < k < \nicefrac{m(m-\ell-1)}{\ell}$, 
    then $p''_a = 1$.
\end{lemma}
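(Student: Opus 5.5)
The plan is to condition on a high-probability ``good event'' $G_n$ and show that on $G_n$ every agent votes $a$ at the end of round two. Since $\Pr[G_n]\to 1$, this gives $p''_a = 1$. I take $G_n$ to be the intersection of the following events:
\begin{itemize}
    \item[(i)] $\vote^0(n)=a$;
    \item[(ii)] $\bigcup_{i\in N_b} B_i^0 = B$;
    \item[(iii)] $\vote^1(n)=b$;
    \item[(iv)] the set $N'_a$ of agents with $v_i^1=a$ satisfies $\bigcup_{i\in N'_a} A_i^0 = A$.
\end{itemize}
Events (i), (ii) and (iii) have probability tending to $1$ by Proposition \ref{proposition:round-0-collective-outcome}$\textsf{b}$, Lemma \ref{lemma:initial-minority-covers-b-evidence} and Proposition \ref{proposition:round-1-collective-outcome}$\textsf{b}$, respectively. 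A finite intersection of events each with probability tending to $1$ also has probability tending to $1$, so only (iv) needs new work.

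For (iv), observe that on (i) and (ii) all $b$-evidence is public after round one, and no $a$-evidence was disclosed in round one, since only $b$-dissenters spoke. Hence $v_i^1 = a$ exactly when $|A_i^0| > m-\ell$, so $N'_a=\{i : |A_i^0|>m-\ell\}$ is determined by each agent's own independent draw. I then mimic the proof of Lemma \ref{lemma:initial-minority-covers-b-evidence}. Fix an item $a_j\in A$. The events ``$i\in N'_a$ and $a_j\in A_i^0$'' are independent across agents with a common probability $q_j$. By exchangeability of the items in $A$, $q_j = q = \Pr[|A_i^0|>m-\ell]\cdot \mathbb{E}\big[|A_i^0| \,\big|\, |A_i^0|>m-\ell\big]/\ell$. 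This is positive because $p'_a>0$ by Lemma \ref{lemma:round-1-agent-accuracies}$\textsf{b}$; note that $k>m-\ell$ is exactly what makes this possible. So $\Pr[a_j\notin \bigcup_{i\in N'_a}A_i^0] = (1-q)^n \to 0$, and a union bound over the $\ell$ items gives (iv).

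On $G_n$, the round-two dynamics are deterministic. Since $\vote^1(n)=b$, every $i\in N'_a$ is a dissenter at $t=1$ and discloses all non-public $a$-evidence it holds. By (iv), their union is all of $A$, and everything in $A$ is still non-public, so $\mathit{PE}^2\supseteq A\cup B = E$. Every agent then holds all $\ell$ items for $a$ and all $m-\ell$ items for $b$. Since $\ell>m-\ell$, every agent votes $v_i^2=a$ on $G_n$. Hence $\Pr[v_i^2=a]\ge \Pr[G_n]\to 1$, which gives $p''_a=1$ (and $p''_b=p''_\emptyset=0$).

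The main obstacle is the bookkeeping in (iv). We must make sure that membership in $N'_a$ depends only on $i$'s own draw, which holds only on (i) and (ii), while the independence computation is done unconditionally. I would handle this by defining $\tilde N_a=\{i:|A_i^0|>m-\ell\}$ unconditionally and proving the covering statement for $\tilde N_a$. I would then note that $\tilde N_a=N'_a$ on (i) and (ii). An intersection of finitely many events with probability tending to $1$ still has probability tending to $1$, so no conditioning subtleties arise.
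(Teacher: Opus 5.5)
Your proof is correct and follows the same route as the paper's. You use Proposition~\ref{proposition:round-1-collective-outcome}($\textsf{b}$) to get a $b$-majority at $t=1$, show that the surviving $a$-minority jointly holds all of $A$ via the argument of Lemma~\ref{lemma:initial-minority-covers-b-evidence}, and conclude that after their disclosure at $t=2$ everyone holds $E$ and votes $a$. Your explicit good event $G_n$, the unconditional set $\tilde N_a$, and the exact per-agent covering probability $q$ make rigorous what the paper compresses into ``by the same argument as in Lemma~\ref{lemma:initial-minority-covers-b-evidence}''.
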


\noindent
Clearly, once everyone absorbs all the evidence for $a$ there is to know, everyone is convinced that $a$
is the superior alternative (recall that $\ell > \nicefrac{m}{2}$).
The collective decision, then, is flipped back to $a$, and deliberation ends here:
all evidence has been made public, there is nothing more to discuss.

\begin{proposition}\label{proposition:round-2-collective-outcome}
    If $\ell > \nicefrac{m}{2}$, $k$ is odd, and $m-\ell < k < \nicefrac{m(m-\ell-1)}{\ell}$,
    then $\lim_{n \rightarrow \infty}{\Pr}\big[ \vote^2(n) = a \big] = 1$
    and deliberation terminates at this round with $a$ as winner.
\end{proposition}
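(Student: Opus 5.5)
We will show that a single high-probability event forces the outcome. It has three parts, holding simultaneously with probability tending to $1$: ($E_0$) $\vote^0(n) = a$; ($E_1$) $\vote^1(n) = b$, with every $b$-item public after round one, i.e. $\mathit{PE}^1 \supseteq B$; ($E_2$) the set $N'_a$ of agents with $v_i^1 = a$ jointly holds all of $A$, i.e. $\bigcup_{i \in N'_a} A_i^0 = A$. Each part is already available or follows from a previous argument, so a union bound makes their intersection have probability tending to $1$. On this intersection the outcome is deterministic, so it suffices to trace the dynamics there.

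The components come from earlier results. $E_0$ is Proposition~\ref{proposition:round-0-collective-outcome}$\textsf{b}$. $E_1$ is Proposition~\ref{proposition:round-1-collective-outcome}$\textsf{b}$ together with Lemma~\ref{lemma:initial-minority-covers-b-evidence}. The latter says the $b$-voters at $t=0$ are dissenters, hold all of $B$, and disclose it in round one.

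For $E_2$, note that for $k > m-\ell$ we have $p'_a > 0$ by Lemma~\ref{lemma:round-1-agent-accuracies}$\textsf{b}$. Moreover, an agent votes $a$ at $t=1$ exactly when $|A_i^0| > m-\ell$. This holds because, on $E_0$ and the covering event, each agent's evidence at $t=1$ is $A_i^0 \cup B$. Agents in round one receive only $b$-items, since $a$-voters do not dissent.

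Hence $N'_a$ is determined by the independent initial draws. Conditioning on $|A_i^0| > m-\ell$, each fixed item $a_j$ still lies in $A_i^0$ with some positive probability $q_j$. We then mimic Lemma~\ref{lemma:initial-minority-covers-b-evidence} and bound
\[
\Pr\Big[a_j \notin \textstyle\bigcup_{i \in N'_a} A_i^0\Big] \le (1 - p'_a q_j)^n ,
\]
treating each agent independently as "votes $a$ at $t=1$ and holds $a_j$." A union bound over the $\ell$ items then gives $\Pr[E_2] \to 1$. One caveat is needed: the identification of $N'_a$ with the threshold event holds only on $E_0$. We handle this by defining the threshold event directly from the initial draws and intersecting with $E_0$ and the covering event.

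Now trace the dynamics on the intersection. In round two the collective decision at $t=1$ is $b$, so every agent in $N'_a$ dissents. These agents disclose all their non-public $a$-evidence, and by $E_2$ this is all of $A$. After the update every agent holds $A \cup B = E$. Since $\ell > m-\ell$, every agent votes $a$, so $p''_a = 1$ as in Lemma~\ref{lemma:round-2-agent-accuracies}, and $\vote^2(n) = a$ unanimously. In round three no one dissents, so there is no disclosure and deliberation terminates with $a$. (Alternatively, $\mathit{PE}^2 = E$ already rules out further disclosure.)

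The main obstacle is making $E_2$ rigorous: the set $N'_a$ depends on the whole population through $E_0$, so independence is not immediate. Defining everything through the per-agent event $|A_i^0| > m-\ell$, which is independent across agents, and intersecting with $E_0$ resolves this.
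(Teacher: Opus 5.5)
Your argument is correct and is essentially the paper's: the paper derives the proposition from Lemma~\ref{lemma:round-2-agent-accuracies}, whose proof is exactly your chain (reversal to $b$ at $t=1$ with all of $B$ public, the round-one $a$-minority covering $A$ by the argument of Lemma~\ref{lemma:initial-minority-covers-b-evidence}, disclosure in round two making all of $E$ public, unanimity for $a$, and then termination). Your explicit intersection of high-probability events, together with the independent per-agent threshold events $|A_i^0| > m-\ell$, simply makes rigorous the dependence issue that the paper covers only with the phrase ``by the same argument''.
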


\noindent
The collective decision at the second round is illustrated in Figure \ref{figure:collective-accuracy}, 
where we see an increase in collective accuracy at $t=2$ for $k$ 
in between $m-\ell$ and $\nicefrac{m(m-\ell-1)}{\ell}$.
Again, the transition here is not drastic: the reason, as with the case for $t=1$, is the small $n$
used in the simulations (the transitions become sharper as $n$ grows).

Our jury theorem is a matter of putting together the results obtained
so far in a statement about the outcome of deliberation at termination.

\begin{theorem}\label{theorem:jury-theorem-deliberation}
    For $\ell > \nicefrac{m}{2}$ and odd $k$, 
    it holds that
    if $m-\ell < k < \nicefrac{m(m-\ell-1)}{\ell}$ 
    or $\nicefrac{m(m-\ell+1)}{\ell} < k$,
    then $\lim_{n \rightarrow \infty}{\Pr}\big[ \vote^*(n) = a \big] = 1$.
    If $k \leq m-\ell$, then $\lim_{n \rightarrow \infty}{\Pr}\big[ \vote^*(n) = b \big] = 1$.
    Additionally, deliberation terminates, with high probability, after at most two rounds.
\end{theorem}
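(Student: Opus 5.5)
The proof assembles Propositions \ref{proposition:round-0-collective-outcome}, \ref{proposition:round-1-collective-outcome} and \ref{proposition:round-2-collective-outcome} through a case split on $k$. For each case we exhibit a ``good'' event $G_n$ of the initial evidence distribution with ${\Pr}[G_n] \rightarrow 1$. On $G_n$ the deterministic dynamics is forced to terminate by round two with the claimed winner. Then ${\Pr}[\vote^*(n) = x] \geq {\Pr}[G_n] \rightarrow 1$, and termination within two rounds holds with high probability.

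We treat the cases in order of the rounds they reach.

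\emph{Case $k \geq 2(m-\ell)$.} Proposition \ref{proposition:round-0-collective-outcome}$\textsf{a}$ applies directly: there are no $b$-voters, so nobody dissents and deliberation stops at $t=0$ with $a$.

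\emph{Case $k \leq m-\ell$.} Take $G_n$ to be the intersection of three events: $\vote^0(n)=a$ (Proposition \ref{proposition:round-0-collective-outcome}$\textsf{b}$), $\bigcup_{i\in N_b} B_i^0 = B$ (Lemma \ref{lemma:initial-minority-covers-b-evidence}), and $N_b \neq \emptyset$. Each has probability tending to $1$, so by a union bound so does $G_n$. On $G_n$ all of $B$ becomes public at $t=1$. Every agent holds at most $k \leq m-\ell$ items of $A$, so nobody prefers $a$, and at least the former $b$-voters strictly prefer $b$. Hence $\vote^1(n)=b$ and there are no dissenters, which makes $b$ the terminal decision. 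This gives Proposition \ref{proposition:round-1-collective-outcome}$\textsf{a}$.

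\emph{Case $\nicefrac{m(m-\ell+1)}{\ell} < k < 2(m-\ell)$.} Proposition \ref{proposition:round-1-collective-outcome}$\textsf{c}$ gives $\vote^1(n)=a$ with high probability. We then check termination: on the same $G_n$, at $t=1$ all $b$-evidence is already public. The remaining $b$-dissenters therefore have nothing new to disclose, and round two produces no disclosure.

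\emph{Case $m-\ell < k < \nicefrac{m(m-\ell-1)}{\ell}$.} Here we use Proposition \ref{proposition:round-2-collective-outcome}. We add to $G_n$ two further events: $\vote^1(n)=b$ (Proposition \ref{proposition:round-1-collective-outcome}$\textsf{b}$), and the event that the set of $a$-voters at $t=1$ jointly covers $A$ (the argument behind Lemma \ref{lemma:round-2-agent-accuracies}). On this event the $a$-voters at $t=1$ are dissenters and disclose all of $A$ at $t=2$. All evidence is then public, every agent sees $\ell > m-\ell$, and the vote is unanimous for $a$, so there are no dissenters and the process halts.

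The main obstacle is the coverage claim in this last case. The set of $a$-voters at $t=1$ is exactly the set of agents with $|A_i^0| > m-\ell$, and that set is determined by the initial draws alone. So the argument of Lemma \ref{lemma:initial-minority-covers-b-evidence} goes through once we condition correctly. Fix an item $a_j \in A$ and look at the probability that a single agent both draws $a_j$ and has $|A_i^0| > m-\ell$. Since $k > m-\ell$, a draw containing $a_j$ together with $m-\ell$ other $a$-items is feasible, because $k \leq m$ and $\ell \geq m-\ell+1$. This makes the probability some $q_j > 0$ that does not depend on $n$. By independence across agents, the probability that $a_j$ is uncovered is $(1-q_j)^n \rightarrow 0$, and a union bound over the $\ell$ items finishes the step.

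The theorem deliberately leaves the transition window $\nicefrac{m(m-\ell-1)}{\ell} \leq k \leq \nicefrac{m(m-\ell+1)}{\ell}$ untreated, and the proof should not attempt it.

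Combining the cases with the union bounds over the finitely many high-probability events yields the stated limits and termination within at most two rounds.
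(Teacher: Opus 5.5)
Your proposal is correct and follows the paper's route: the theorem is obtained by assembling Propositions \ref{proposition:round-0-collective-outcome}, \ref{proposition:round-1-collective-outcome} and \ref{proposition:round-2-collective-outcome} through a case split on $k$, with the termination claims checked on the intersection of the high-probability events. Your round-two coverage argument uses the per-agent event that $i$ draws $a_j$ and has $|A_i^0| > m-\ell$, which has probability $q_j > 0$ independent of $n$. This is a cleaner justification than the paper's appeal to ``the same argument as in Lemma \ref{lemma:initial-minority-covers-b-evidence}'', because it avoids conditioning on membership in the voter set.
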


\section{Biased Sampling}\label{section:biased-sampling}

\begin{figure}[t]
    \centering
    \begin{subfigure}[b]{0.49\textwidth}
        \includegraphics[width=\textwidth]{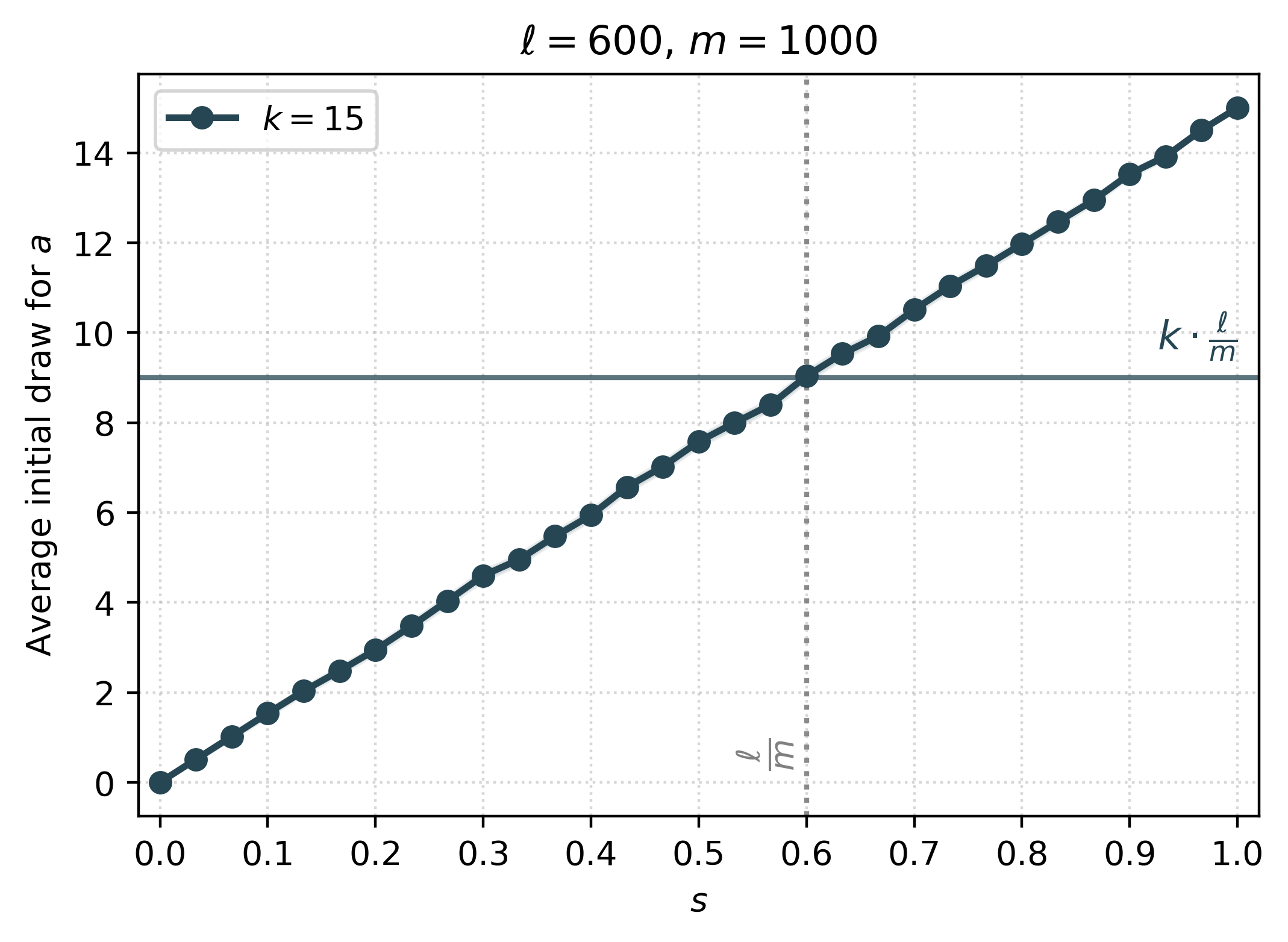}
        \caption{%
            Average number of evidence items for $a$ in an agent's initial draw.
            }
        \label{figure:salience-avg-successes}
    \end{subfigure}
    \begin{subfigure}[b]{0.49\textwidth}
        \includegraphics[width=\textwidth]{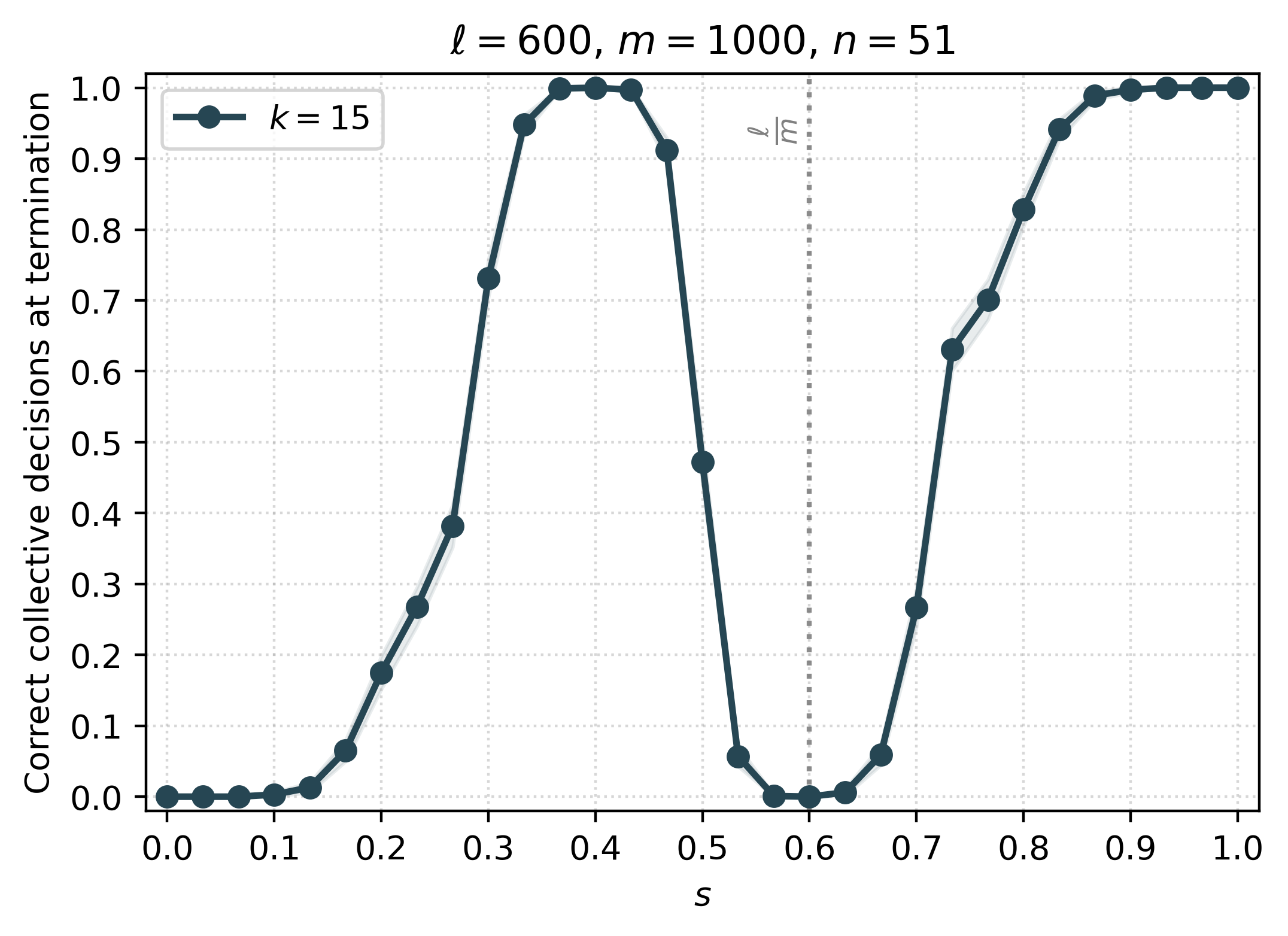}
        \caption{%
            Average collective accuracy at the end of deliberation.
            }
        \label{figure:salience-collective-accuracy}
    \end{subfigure}
    \begin{subfigure}[b]{0.49\textwidth}
        \includegraphics[width=\textwidth]{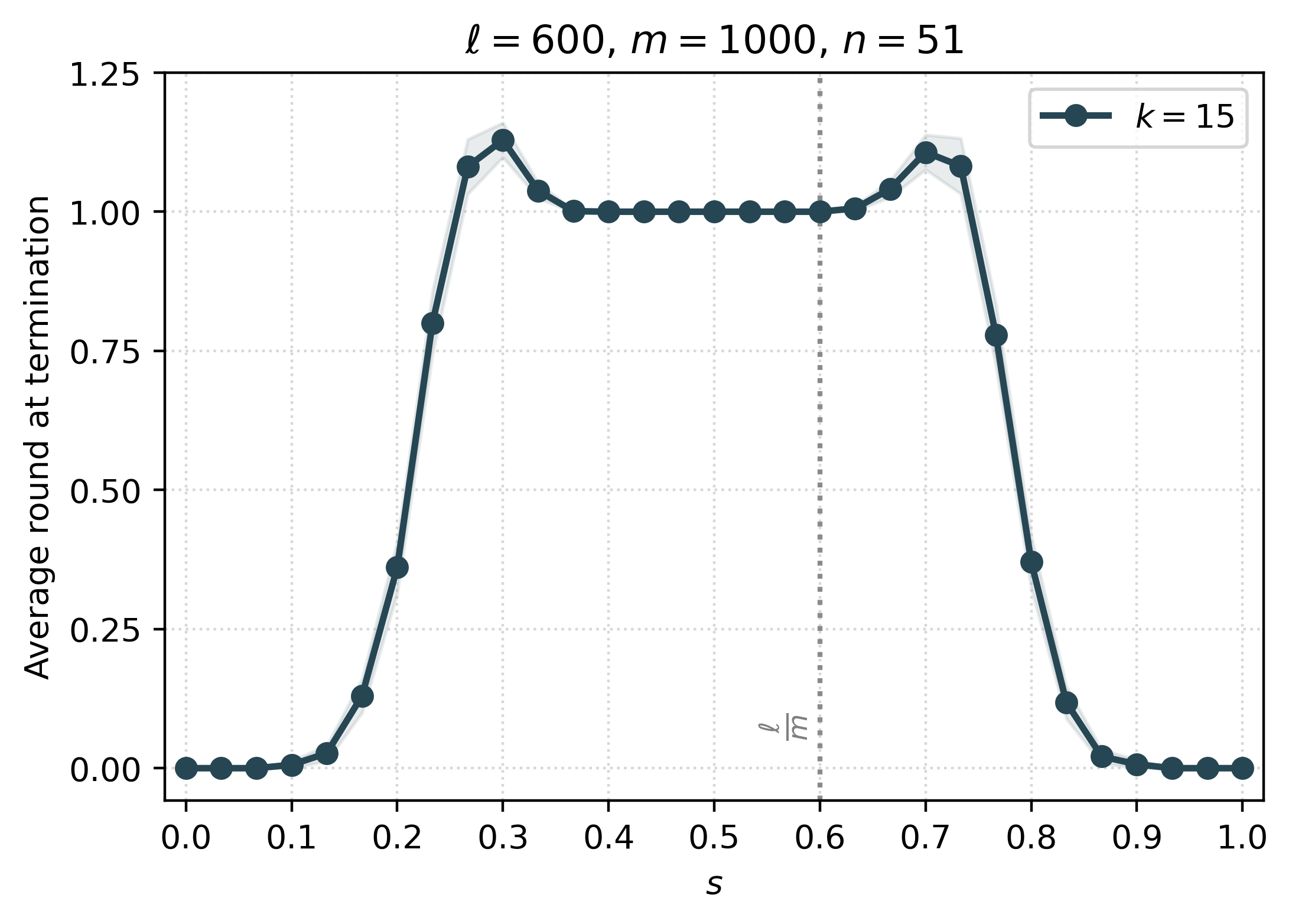}
        \caption{%
            Average round at which deliberation terminates.
            }
        \label{figure:salience-avg-nr-rounds}
    \end{subfigure}
    \begin{subfigure}[b]{0.49\textwidth}
        \includegraphics[width=\textwidth]{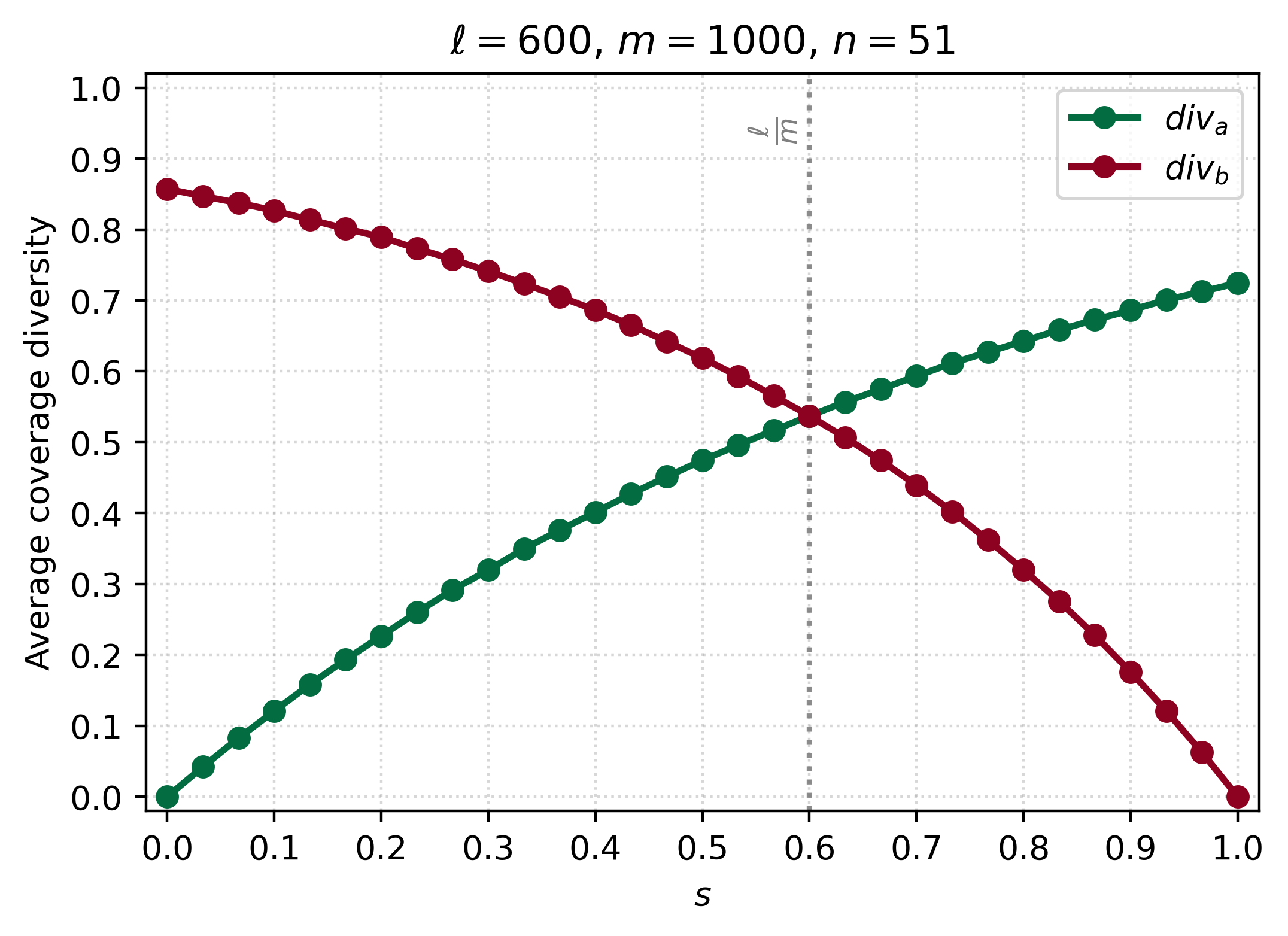}
        \caption{%
            Average coverage diversity of the initial evidence draws.
            }
        \label{figure:salience-diversity}
    \end{subfigure}
    \caption{%
        Simulation results for biased sampling with salience factor $s$.
        Each data point is an average across 1000 simulations.
    }
    \label{figure:salience-results}
\end{figure}

If $k$ is small-ish (below $m-\ell$), 
Theorem \ref{theorem:jury-theorem-deliberation} says that 
deliberation is likely to do badly. Can we do better for such values of $k$?

A natural way to generalize our model is to add a notion of salience 
in the initial acquisition of evidence.
Intuitively, more salient items of evidence should be sampled more often: perhaps
because they are more accesible, or more likely to be mentioned in conversation
(e.g., in a competition between two products one company may invest 
more in advertising, making the virtues of its product more visible).
Formally, we model this with a \emph{salience factor} $s \in [0,1]$
that gets normalized to a probability distribution $
    \bm{s} = (
        \underbrace{s_a, \dots, s_a}_{\ell}, 
        \underbrace{s_b, \dots, s_b}_{m-\ell}
        )
$
over the $m$ items of evidence,
where $s_a = \nicefrac{s}{\ell}$ and $s_b = \nicefrac{1-s}{m-\ell}$
are the probabilities of sampling items of evidence for $a$ and $b$, respectively.
The probabilities are chosen such that evidence items supporting $a$ (respectively, $b$) are sampled
with probability proportional to $s$ (respectively, $1-s$).
For $s = \nicefrac{\ell}{m}$ we get the baseline case where each item of evidence has the same salience.
This results in \emph{biased sampling}, a model described 
by the Wallenius noncentral hypergeometric distribution
\cite{chesson_non-central_1976,fog_calculation_2008,wallenius_biased_1973}.
Here we just illustrate its effect on the initial draws in Figure \ref{figure:salience-avg-successes}.
Note that: ($i$) as $s$ increases
so does the average number of evidence items for $a$ in an agent's initial draw, and 
($ii$) for $s = \nicefrac{\ell}{m}$ the average number of evidence items 
for $a$ is very well approximated by 
$k \cdot\nicefrac{\ell}{m}$, the expected value of the (unbiased) hypergeometric distribution.

With biased sampling, the average collective accuracy at the end of 
deliberation (Figure \ref{figure:salience-collective-accuracy})
shows a striking, double-peaked pattern: 
for $s$ around $\nicefrac{\ell}{m}$ (equal saliences) accuracy is close to 0, as
predicted by Theorem \ref{theorem:jury-theorem-deliberation} for $k$ below $m-\ell$;
but for $s$ on either side of this value, accuracy increases dramatically. What is going on?
The explanation is that for $s$ smaller than $0.2$, or larger than $0.8$, 
evidence for $a$ is either not at all salient,
or so salient that it is almost always sampled, 
i.e., deliberation is likely to terminate immediately with a unanimous 
vote for either $b$ or $a$, respectively. 
The interesting case is $s$ around $[0.3, 0.5]$, where group performance is good
even though $a$ is less salient than $b$:
$a$, here, turns out to be salient enough to have supporters, though not a majority.
This, now, invites disclosure for $a$ at $t=1$, which is very likely decisive.
This explanation is supported by Figure \ref{figure:salience-avg-nr-rounds}, 
which shows that, for $s$ in this range, deliberation usually terminates at round $1$.

Note that what moves the needle in a deliberation round is not 
just having many agents with evidence: 
it is that, taken together, agents cover parts of the evidence space that others do not.
This intuition underlies the generic idea that diversity 
boosts the wisdom of crowds \cite{Hong2004u,Page2008k,Landemore2013z}.
We can explore this here by defining a simple measure of \emph{coverage diversity}, 
which we define as the fraction of the total evidence items 
for alternative in the initial draws of the agents:
$$
    \textit{div}_a = \frac{1}{\ell} \cdot \left| \bigcup_{i \in N} A_i^0 \right|, 
    \quad
    \textit{div}_b = \frac{1}{m-\ell} \cdot \left| \bigcup_{i \in N} B_i^0 \right|. 
$$
Figure \ref{figure:salience-diversity} shows how coverage diversity varies with $s$: as $s$ increases 
(evidence for $a$ becomes more salient), 
the profile of initial draws covers more of the total evidence for $a$. 
The opposite effect happens for $b$ and, reassuringly, 
coverage diversity is equal for $s = \nicefrac{\ell}{m}$.
Note, however, that diversity is a double edged sword:
high diversity for only one of the alternatives leads to asymmetric outcomes 
(one side hogging all the evidence); there is likely a sweet spot in the diversity 
of an evidence profile that is useful for deliberation, and worth more investigation.

\section{Discussion}\label{section:conclusion}
We believe formal models of deliberation along the lines 
discussed here can inform the design of discussion-based decision processes. 
As an example, business owners can reply to reviews on Google Maps, 
but users cannot reply to these replies: discussion is effectively capped at the first round,
which, we have seen, can prevent disclosure of potentially relevant information. 
By contrast, reviewing rounds in most CS conferences allow as many replies as necessary 
in the discussion phase between reviewers (as well as rebuttals from the authors),
which, in light of Theorem \ref{theorem:jury-theorem-deliberation}, seems like a better design choice.

Theorem \ref{theorem:jury-theorem-deliberation} seems encouraging (there is a broad range 
of $k$'s for which deliberation leads to the correct outcome!).
On closer look, though, the range of values for which deliberation does badly ($k$ below $m-\ell$) 
is perhaps more realistic (see Figure \ref{figure:collective-accuracy}):
essentially, if $\ell$ is not much larger than $m-\ell$ 
(the superior alternative does not have a huge advantage over its competitor),
then $k$ has to be rather large (agents have to be diligent and explore a lot 
before forming their first opinion)
for deliberation not to mess things up. Not as encouraging, perhaps.
But Section \ref{section:biased-sampling} shows that, even for small $k$, 
biased sampling can mitigate the problem by, counterintuitively, 
making the superior alternative a bit less salient.

A clear limitation of our model is that it leaves out elements of real-world deliberation 
that are likely to be important (strategic behavior, the ability to reject evidence).
We think a lot of insight can be offered even in a stylized model 
but future work will seek to make the model cover more ground.


%
%
%
\bibliographystyle{splncs04}
\bibliography{bibliography.bib}

@book{Bachtiger2024s,
  title     = {{Deliberative Democracy for Diabolical Times: Confronting
               Populism, Extremism, Denial, and Authoritarianism}},
  author    = {Bächtiger, André and Dryzek, John S},
  publisher = {Cambridge University Press},
  year      = 2024
}

@book{Brennan2016t,
  title     = {{Against Democracy}},
  author    = {Brennan, Jason},
  publisher = {Princeton University Press},
  year      = 2016
}

@article{Caluwaerts2016r,
  title     = {{Generating Democratic Legitimacy through Deliberative
               Innovations: The Role of Embeddedness and Disruptiveness}},
  author    = {Caluwaerts, Didier and Reuchamps, Min},
  journal   = {Representation},
  publisher = {Informa UK Limited},
  volume    = 52,
  number    = 1,
  pages     = {13--27},
  year      = 2016
}

@article{Caluwaerts2023l,
  title     = {{Deliberation and polarization: a multi-disciplinary review}},
  author    = {Caluwaerts, Didier and Bernaerts, Kamil and Kesberg, Rebekka and
               Smets, Lien and Spruyt, Bram},
  journal   = {Frontiers in Political Science},
  publisher = {Frontiers Media SA},
  volume    = 5,
  number    = 1127372,
  pages     = 1127372,
  year      = 2023
}

@book{Caplan2008w,
  title     = {{The Myth of the Rational Voter: Why Democracies Choose Bad Policies}},
  author    = {Caplan, Bryan},
  publisher = {Princeton University Press},
  address   = {Princeton, NJ},
  year      = 2008
}

@inproceedings{Chingoma2023i,
  title     = {{Deliberation as Evidence Disclosure: A Tale of Two Protocol Types}},
  author    = {Chingoma, Julian and Haret, Adrian},
  booktitle = {{Proceedings of the Thirty-Second International Joint Conference on Artificial Intelligence (IJCAI 2023)}},
  pages     = {2589--2597},
  year      = 2023
}

@incollection{Cohen1997h,
  title     = {{Deliberation and Democratic Legitimacy}},
  author    = {Cohen, Joshua},
  booktitle = {{Deliberative Democracy}},
  publisher = {The MIT Press},
  year      = 1997
}

@article{Dietrich2025k,
  title     = {{Deliberation and the wisdom of crowds}},
  author    = {Dietrich, Franz and Spiekermann, Kai},
  journal   = {Economic Theory},
  publisher = {Springer},
  volume    = 79,
  number    = 2,
  pages     = {603--655},
  year      = 2025
}

@article{Ding2021d,
  title     = {{Deliberation and epistemic democracy}},
  author    = {Ding, Huihui and Pivato, Marcus},
  journal   = {Journal of Economic Behavior \& Organization},
  publisher = {Elsevier},
  volume    = 185,
  pages     = {138--167},
  year      = 2021
}

@article{Dryzek2001m,
  title     = {{Legitimacy and Economy in Deliberative Democracy}},
  author    = {Dryzek, John S.},
  journal   = {Political Theory},
  publisher = {SAGE Publications},
  volume    = 29,
  number    = 5,
  pages     = {651--669},
  year      = 2001
}

@article{Dryzek2019q,
  title     = {{The crisis of democracy and the science of deliberation}},
  author    = {Dryzek, John S. and Bächtiger, André and Chambers, Simone and
               Cohen, Joshua and Druckman, James N. and Felicetti, Andrea and
               Fishkin, James S. and Farrell, David M. and Fung, Archon and
               Gutmann, Amy and Landemore, Hélène and Mansbridge, Jane and
               Marien, Sofie and Neblo, Michael A and Niemeyer, Simon and
               Setälä, Maija and Slothuus, Rune and Suiter, Jane and Thompson,
               Dennis and Warren, Mark E.},
  journal   = {Science},
  publisher = {American Association for the Advancement of Science (AAAS)},
  volume    = 363,
  number    = 6432,
  pages     = {1144--1146},
  year      = 2019
}

@book{Estlund2008a,
  title     = {{Democratic Authority: A Philosophical Framework}},
  author    = {Estlund, David M.},
  publisher = {Princeton University Press},
  address   = {Princeton, NJ},
  year      = 2008
}

@incollection{Estlund2018r,
  title     = {{The Epistemic Value of Democratic Deliberation}},
  author    = {Estlund, David and Landemore, Hélène},
  editor    = {Bächtiger, Andre and Dryzek, John S. and Mansbridge, Jane and Warren, Mark E.},
  booktitle = {{The Oxford Handbook of Deliberative Democracy}},
  publisher = {Oxford University Press},
  pages     = {113--131},
  year      = 2018
}

@article{Farrar2010r,
  title     = {{Disaggregating Deliberation’s Effects: 
               An Experiment within a Deliberative Poll}},
  author    = {Farrar, Cynthia and Fishkin, James S. and Green, Donald P. and
               List, Christian and Luskin, Robert C. and Levy Paluck, Elizabeth},
  journal   = {British Journal of Political Science},
  publisher = {Cambridge University Press (CUP)},
  volume    = 40,
  number    = 2,
  pages     = {333--347},
  year      = 2010
}

@book{Fishkin2025f,
  title     = {{Can Deliberation Cure the Ills of Democracy?}},
  author    = {Fishkin, James S.},
  publisher = {Oxford University Press},
  address   = {London, England},
  year      = 2025
}

@article{Germann2024h,
  title     = {{Scaling up? Unpacking the effect of deliberative mini-publics on
               legitimacy perceptions}},
  author    = {Germann, Micha and Marien, Sofie and Muradova, Lala},
  journal   = {Political Studies},
  publisher = {SAGE Publications},
  volume    = 72,
  number    = 2,
  pages     = {677--700},
  year      = 2024
}

@article{Goodin2006r,
  title     = {{Deliberative Impacts: The Macro-Political Uptake of
               Mini-Publics}},
  author    = {Goodin, Robert E and Dryzek, John S},
  journal   = {Politics \& Society},
  publisher = {SAGE Publications},
  volume    = 34,
  number    = 2,
  pages     = {219--244},
  year      = 2006
}

@article{Gronlund2015r,
  title     = {{Does Enclave Deliberation Polarize Opinions?}},
  author    = {Grönlund, Kimmo and Herne, Kaisa and Setälä, Maija},
  journal   = {Political Behavior},
  publisher = {Springer Science and Business Media LLC},
  volume    = 37,
  number    = 4,
  pages     = {995--1020},
  year      = 2015
}

@article{Hong2004u,
  title     = {{Groups of diverse problem solvers can outperform groups of
               high-ability problem solvers}},
  author    = {Hong, Lu and Page, Scott E.},
  journal   = {Proceedings of the National Academy of Sciences of the United
               States of America},
  publisher = {National Acad Sciences},
  volume    = 101,
  number    = 46,
  pages     = {16385--16389},
  year      = 2004
}

@book{Landemore2013z,
  title     = {{Democratic Reason: Politics, Collective Intelligence, and the
               Rule of the Many}},
  author    = {Landemore, Hélène},
  publisher = {Princeton University Press},
  year      = 2013
}

@article{Landemore2014p,
  title     = {{Democracy as Heuristic: The Ecological Rationality of Political Equality}},
  author    = {Landemore, Hélène},
  journal   = {The Good Society},
  publisher = {The Pennsylvania State University Press},
  volume    = 23,
  number    = 2,
  pages     = {160--178},
  year      = 2014
}

@book{Lehmann2022a,
  title     = {{Testing Statistical Hypotheses}},
  author    = {Lehmann, E. L. and Romano, Joseph P.},
  publisher = {Springer},
  address   = {Cham, Switzerland},
  edition   = 4,
  series    = {Springer Texts in Statistics},
  year      = 2022
}

@article{List2011j,
  title     = {{Group Communication and the Transformation of Judgments: An Impossibility Result}},
  author    = {List, Christian},
  journal   = {The Journal of Political Philosophy},
  publisher = {Wiley},
  volume    = 19,
  number    = 1,
  pages     = {1--27},
  year      = 2011
}

@article{List2013m,
  title     = {{Deliberation, Single-Peakedness, and the Possibility of
               Meaningful Democracy: Evidence from Deliberative Polls}},
  author    = {List, Christian and Luskin, Robert C. and Fishkin, James S and
               McLean, Iain},
  journal   = {The Journal of Politics},
  publisher = {University of Chicago Press},
  volume    = 75,
  number    = 1,
  pages     = {80--95},
  year      = 2013
}

@incollection{Mendelberg2002p,
  title     = {{The Deliberative Citizen: Theory and Evidence}},
  author    = {Mendelberg, Tali},
  booktitle = {{Political Decision Making, Deliberation and Participation}},
  publisher = {Elsevier},
  volume    = 6,
  pages     = {151--193},
  year      = 2002
}

@article{Miller1992d,
  title     = {{Deliberative Democracy and Social Choice}},
  author    = {Miller, David},
  journal   = {Political Studies},
  publisher = {SAGE Publications},
  volume    = 40,
  number    = {1},
  pages     = {54--67},
  year      = 1992
}

@article{Siebe2024j,
  title     = {{The interdependence of social deliberation and judgment aggregation}},
  author    = {Siebe, Hendrik},
  journal   = {Social Choice and Welfare},
  publisher = {Springer Science and Business Media LLC},
  volume    = 63,
  number    = {3-4},
  pages     = {475--507},
  year      = 2024
}

@article{Stasser1985q,
  title     = {{Pooling of unshared information in group decision making: 
               Biased information sampling during discussion}},
  author    = {Stasser, Garold and Titus, William},
  journal   = {Journal of Personality and Social Psychology},
  publisher = {American Psychological Association (APA)},
  volume    = 48,
  number    = 6,
  pages     = {1467--1478},
  year      = 1985
}

@article{Sunstein2002m,
  title     = {{The Law of Group Polarization}},
  author    = {Sunstein, Cass R.},
  journal   = {The Journal of Political Philosophy},
  publisher = {Wiley},
  volume    = 10,
  number    = 2,
  pages     = {175--195},
  year      = 2002
}

@book{Sunstein2014h,
  title     = {{Wiser: Getting Beyond Groupthink to Make Groups Smarter}},
  author    = {Sunstein, Cass R. and Hastie, Reid},
  publisher = {Harvard Business Review Press},
  address   = {Boston, MA},
  year      = 2014
}

@BOOK{Goodin2018r,
  title     = "{An Epistemic Theory of Democracy}",
  author    = "Goodin, Robert E. and Spiekermann, Kai",
  publisher = "Oxford University Press",
  year      =  2018
}

@ARTICLE{List2001g,
  title     = "{Epistemic Democracy: Generalizing the Condorcet Jury Theorem}",
  author    = "List, Christian and Goodin, Robert E.",
  journal   = "The Journal of Political Philosophy",
  publisher = "Blackwell",
  volume    =  9,
  number    =  3,
  pages     = "277--306",
  year      =  2001
}

@INCOLLECTION{Dietrich2023h,
  title     = "{Jury Theorems}",
  author    = "Dietrich, Franz and Spiekermann, Kai",
  editor    = "Zalta, Edward N. and Nodelman, Uri",
  booktitle = "{The Stanford Encyclopedia of Philosophy}",
  publisher = "Metaphysics Research Lab, Stanford University",
  edition   = "Spring 2023",
  year      =  2023
}

@ARTICLE{Grofman1983e,
  title   = "{Thirteen theorems in search of the truth}",
  author  = "Grofman, Bernard and Owen, Guillermo and Feld, Scott L.",
  journal = "Theory and Decision",
  volume  =  15,
  number  =  3,
  pages   = "261--278",
  year    =  1983
}

@INPROCEEDINGS{Goel2025d,
  title     = "{Metric distortion of small-group deliberation}",
  author    = "Goel, Ashish and Goyal, Mohak and Munagala, Kamesh",
  booktitle = "{Proceedings of the 57th Annual ACM Symposium on Theory of
               Computing}",
  pages     = "1568--1579",
  year      =  2025
}

@ARTICLE{Goyal2026l,
  title         = "{Beyond polarization: Opinion mixing and social influence in
                   deliberation}",
  author        = "Goyal, Mohak and Gelauff, Lodewijk and Gupta, Naman and Goel,
                   Ashish and Munagala, Kamesh",
  journal       = "arXiv [cs.SI]",
  year          =  2026,
  archivePrefix = "arXiv",
  primaryClass  = "cs.SI"
}

@BOOK{Page2008k,
  title     = "{The Difference: How the Power of Diversity Creates Better
               Groups, Firms, Schools, and Societies}",
  author    = "Page, Scott E.",
  publisher = "Princeton University Press",
  address   = "Princeton, NJ",
  year      =  2008
}

@ARTICLE{Luan2012x,
  title     = "{When Does Diversity Trump Ability (and Vice Versa) in Group
               Decision Making? A Simulation Study}",
  author    = "Luan, Shenghua and Katsikopoulos, Konstantinos V. and Reimer,
               Torsten",
  journal   = "PloS One",
  publisher = "Public Library of Science (PLoS)",
  volume    =  7,
  number    =  2,
  pages     = "e31043",
  year      =  2012
}

@ARTICLE{Reijula2021y,
  title     = "{The Diversity-Ability Trade-Off in Scientific Problem Solving}",
  author    = "Reijula, Samuli and Kuorikoski, Jaakko",
  journal   = "Philosophy of Science",
  publisher = "Cambridge University Press (CUP)",
  volume    =  88,
  number    =  5,
  pages     = "894--905",
  year      =  2021
}

@article{chesson_non-central_1976,
	title = {A {Non}-{Central} {Multivariate} {Hypergeometric} {Distribution} {Arising} from {Biased} {Sampling} with {Application} to {Selective} {Predation}},
	volume = {13},
	issn = {0021-9002},
	number = {4},
	urldate = {2026-05-26},
	journal = {Journal of Applied Probability},
	publisher = {Applied Probability Trust},
	author = {Chesson, Jean},
	year = {1976},
	pages = {795--797},
}

@article{fog_calculation_2008,
	title = {Calculation {Methods} for {Wallenius}' {Noncentral} {Hypergeometric} {Distribution}},
	volume = {37},
	issn = {0361-0918},
	number = {2},
	urldate = {2026-05-26},
	journal = {Communications in Statistics - Simulation and Computation},
	publisher = {Taylor \& Francis},
	author = {Fog, Agner},
	year = {2008},
	pages = {258--273},
}

@phdthesis{wallenius_biased_1973,
	title = {Biased {Sampling}; {The} {Noncentral} {Hypergeometric} {Probability} {Distribution}},
	url = {https://purl.stanford.edu/wh056vj9347},
	language = {en},
	urldate = {2026-05-26},
	author = {Wallenius, K.T.},
	year = {1973},
}

@article{cooper_linearly_2010,
	title = {Linearly bounded liars, adaptive covering codes, and deterministic random walks},
	volume = {1},
	issn = {21563527, 2150959X},
	language = {en},
	number = {3},
	urldate = {2026-05-12},
	journal = {Journal of Combinatorics},
	author = {Cooper, Joshua N.},
	year = {2010},
	pages = {307--333}
}

@article{siegel_median_2001,
	title = {Median {Bounds} and {Their} {Application}},
	volume = {38},
	copyright = {https://www.elsevier.com/tdm/userlicense/1.0/},
	issn = {01966774},
	language = {en},
	number = {1},
	urldate = {2026-05-12},
	journal = {Journal of Algorithms},
	author = {Siegel, Alan},
	year = {2001},
	pages = {184--236},
}

\newpage
\section*{Appendix}
Lemma \ref{lemma:hypergeometric-median-inequality},
stating that the median of the hypergeoemetric distribution, 
is between the floor and the ceiling of the mean,
appears as Corollary 13 of Theorem 12 in \cite{cooper_linearly_2010}, 
which relies, in turn, on Theorem 2.4 in \cite{siegel_median_2001}.

\begin{proof}[Lemma \ref{lemma:round-0-agent-accuracies}]
    ($\textsf{a}$) 
    If $k \geq 2(m-\ell)$ it becomes impossible to draw more than $\nicefrac{k}{2}$
    evidence items for $b$, since there are only $m-\ell$ such items in total. 
    Hence, at $t=0$ agents always sample at least $\nicefrac{k}{2}$ evidence items for $a$, 
    and end up either voting for $a$ or being indifferent,
    and never vote for $b$: $p_a > 0$, $0 \leq p_\emptyset < 1$
    ($p_\emptyset$ cannot be $1$ because $p_a > 0$) and $p_b = 0$.
    Strengthening the condition to $k > 2(m-\ell)$ (or making $k$ odd)
    rules out ties in evidence amounts and makes things more clear-cut:
    $p_\emptyset = 0$ and $p_a = 1$.

    ($\textsf{b}$) 
    If $k < 2(m-\ell)$ it becomes possible to end up voting for $b$ at $t=0$
    (e.g., by drawing all $k$ evidence items for $b$), so $p_b > 0$.
    This is assumed to hold for the remaining sub-cases.

    ($\textsf{b}_1$) 
    If $k$ is odd there is no possibility for ties at $t=0$, so $p_\emptyset=0$. 
    Thus, the events $|A_i^0| > \nicefrac{k}{2}$ and $|A_i^0| < \nicefrac{k}{2}$ are complementary:
    \begin{align}\label{equation:complementary-events}
        p_a + p_b = 1.
    \end{align}
    Both the correct and incorrect vote can be further broken down into $\lfloor \nicefrac{k}{2} \rfloor$ mutually disjoint events of the form $|A_i^0| = j$, for $j \in [k]$, such that:
    \begin{align*}
        p_a 
        = {\Pr}{\left[ |A_i^0| > \frac{k}{2} \right]} & 
        = \sum^k_{j = \lfloor \nicefrac{k}{2}\rfloor + 1} {\Pr}\Big[ |A_i^0| = j \Big], ~\text{and}\\
        p_b
        = {\Pr}{\left[ |A_i^0| < \frac{k}{2} \right]} & 
        = \sum^{\lfloor \nicefrac{k}{2}\rfloor}_{j = 0} {\Pr}\Big[ |A_i^0| = j \Big]
        = \sum^k_{j = \lfloor \nicefrac{k}{2}\rfloor + 1} {\Pr}\Big[ |A_i^0| = k-j \Big].
    \end{align*}
    We now show that each term in the first sum is greater than the corresponding term in the second sum.
    Taking $j > \nicefrac{k}{2}$, we have:
    \begin{align*}
        {\Pr}\Big[ |A_i^0| = j \Big] > {\Pr}\Big[ |A_i^0| = k-j \Big] & 
        \text{ iff } \frac{\binom{\ell}{j}\binom{m-\ell}{k-j}}{\binom{m}{k}} > 
        \frac{\binom{\ell}{k-j}\binom{m-\ell}{j}}{\binom{m}{k}} \\ 
        & \text{ iff } \frac{\binom{\ell}{j}}{\binom{\ell}{k-j}} > \frac{\binom{m-\ell}{j}}{\binom{m-\ell}{k-j}}\\
        & \text{ iff } \frac{(\ell-k+j)!}{(\ell-j)!} > \frac{(m-\ell-k+j)!}{(m-\ell-j)!}\\
        & \text{ iff } \prod_{s=1}^{2j-k} (\ell - j + s) > 
            \prod_{s=1}^{2j-k} (m-\ell - j + s).
    \end{align*}
    With $j > \nicefrac{k}{2}$ we get that $2j-k > 0$,
    and thus the products on the last line are well-defined, with $2j-k$ terms each. 
    What is more, $\ell > \nicefrac{m}{2}$ gives us that 
    $\ell - j + s > m-\ell - j + s$ for all $s \in \{1, \dots, 2j-k\}$.
    Thus, it follows that ${\Pr}\big[ |A_i^0| = j \big] > {\Pr}\big[ |A_i^0| = k-j \big]$
    for all $j > \nicefrac{k}{2}$, and hence:
    \begin{align*}
        p_a & =
        \sum^k_{j = \lfloor \nicefrac{k}{2}\rfloor + 1} {\Pr}\Big[ |A_i^0| = j \Big] \\
        & > \sum^k_{j = \lfloor \nicefrac{k}{2}\rfloor + 1} {\Pr}\Big[ |A_i^0| = k-j \Big] \\ 
        & = p_b.
    \end{align*}
    With $p_a > p_b$ and $p_a + p_b = 1$ (Equation \ref{equation:complementary-events})
    we conclude that $p_a > \nicefrac{1}{2}$. 

    If $k$ is even this argument no longer works:
    there is the additional possibility of a tie ($|A_i^0| = \nicefrac{k}{2}$),
    which means that $p_\emptyset > 0$ and Equation \ref{equation:complementary-events} does not hold.
    The argument from ($\textsf{b}_1$) only gives us that $p_a > p_b$, 
    but this is now insufficient to establish that $p_a > \nicefrac{1}{2}$ 
    (we would need $p_a > p_b + p_\emptyset$ for that)
    and, as a matter of fact, is not even true for the same conditions as in ($\textsf{b}_1$).
    For $m=30$, $\ell = 16$ and $k=2$, we get $p_a \approx 0.28$,
    well below $\nicefrac{1}{2}$. In other words, for even $k$ we need stronger conditions
    to ensure that initial accuracy is above $\nicefrac{1}{2}$. 
    A set of sufficient conditions is offered in ($\textsf{b}_2$).

    ($\textsf{b}_2$) 
    The first inequality of Lemma \ref{lemma:hypergeometric-median-inequality}
    ($\lfloor \mu \rfloor \leq M$), together with the definition of median $M$,
    implies that:
    $$
        {\Pr}\Big[ |A_i^0| \geq \lfloor \mu \rfloor \Big] \geq 
        {\Pr}\Big[ |A_i^0| \geq M \Big] \geq \frac{1}{2},
    $$
    where $\mu = k \cdot \nicefrac{\ell}{m}$ is the expected value of the 
    hypergeometric variable $|A_i^0|$ and $M$ is its median.
    Thus, to ensure that the probability of a correct initial vote is above $\nicefrac{1}{2}$, 
    it suffices to push $\nicefrac{k}{2}$ below $\lfloor \mu \rfloor$---and the precondition of 
    claim ($iii$) is formulated to ensure precisely this. More precisely, note
    that $k > (\nicefrac{\ell}{m} - \nicefrac{1}{2})^{-1}$ is equivalent to:
    $$
        \mu = k\frac{\ell}{m} > \frac{k}{2} + 1,
    $$
    which, in turn, implies that $\lfloor \mu \rfloor \geq \nicefrac{k}{2} + 1$.
    Chaining this with the inequality $M \geq \lfloor \mu \rfloor$ 
    from Lemma \ref{lemma:hypergeometric-median-inequality} 
    we get that $M \geq \nicefrac{k}{2}+1$, and thus:
    $$
        p_a =
        {\Pr}{\left[ |A_i^0| > \frac{k}{2} \right]} = 
        {\Pr}{\left[ |A_i^0| \geq \frac{k}{2}+1 \right]} \geq 
        {\Pr}\Big[ |A_i^0| \geq M \Big] \geq 
        \frac{1}{2}.
    $$
    Why do we also need to have $\nicefrac{\ell}{m} > \nicefrac{1}{2} + \nicefrac{1}{m}$?
    We add it to ensure that the lower bound on $k$ we found is consistent with the assumption
    that $k \leq m$.
    If, on the contrary, it would hold that 
    $\nicefrac{\ell}{m} \leq \nicefrac{1}{2} + \nicefrac{1}{m}$, 
    then $k$ would have to satisfy: 
    $$
        k > \left(\frac{\ell}{m} - \frac{1}{2}\right)^{-1} \geq \left(\frac{1}{m}\right)^{-1} = m,
    $$
    which we do not allow for.
\end{proof}

\begin{proof}[Proposition \ref{proposition:round-0-collective-outcome}]
    ($\textsf{a}$)
    If $k \geq 2(m-\ell)$, then, 
    by Lemma \ref{lemma:round-0-agent-accuracies},
    every agent either votes for $a$ or is indifferent and no one votes for $b$.
    This does not mean that the collective vote at $t=0$ is necessarily $a$: 
    in the edge case where everyone is indifferent, the collective vote is tied.
    However, as $n$ grows, the probability that everyone is indifferent goes to $0$,
    so the probability that $\vote^0(n) = a$ goes to $1$.
    Moreover, since there are no dissenters at $t=0$, deliberation terminates immediately.
    Note that if we strengthen the condition to $k > 2(m-\ell)$, 
    there can no longer be any indifferent voters, and deliberation terminates with
    $a$ as the unanimous winner, for any $n$. 

    ($\textsf{b}$)
    If $k$ is odd there are no indifferent voters at $t=0$ ($p_\emptyset = 0$)
    and each agent votes for $a$ with probability $p > \nicefrac{1}{2}$
    (Lemma \ref{lemma:round-0-agent-accuracies}).  
    By the Law of Large Numbers, the probability of a correct majority vote 
    grows asymptotically to $1$ as $n$ goes to infinity.
\end{proof}

\begin{proof}[Lemma \ref{lemma:initial-minority-covers-b-evidence}]
    Consider the complementary event, i.e., 
    that there is at least one evidence item for $b$
    that is \emph{not} in any of the $B_i^0$, for $i \in N_b$.
    Conditioning on the possible values of $n_b$, we can write:
    $$
        {\Pr}{\left[ \bigcup_{i \in N_b} B^0_i \neq B \right]} = 
        \sum^n_{s=0} {\Pr}{\left[ \bigcup_{i \in N_b} B^0_i \neq B \mid n_b = s \right]} \cdot 
        {\Pr}{\left[ n_b = s \right]}.
    $$
    We show that the probability of this event goes to $0$ as $n$ grows.

    Note, first, that each voter $i \in N$ votes for $b$ with probability $1-p$.
    Hence, the probability that $n_b$ takes on a specific value $s \in \{0, \dots, n\}$ 
    is given by the binomial distribution with parameters $n$ and $1-p$:
    $$
        {\Pr}{\left[ n_b = s \right]} = {n \choose s} (1-p)^s p^{n-s}.
    $$
    Second, for a fixed $n_b = s$, the probability that the union of the $B_i^0$'s 
    does not cover $B$ can be bounded as follows:
    \begin{align*}
        {\Pr}{\left[ \bigcup_{i \in N_b} B^0_i \neq B \mid n_b = s \right]} &= 
        {\Pr}{\left[ \exists b_j \in B \text{ such that } b_j \notin \bigcup_{i \in N_b} B^0_i \right]}\\
        & \leq \sum_{b_j \in B} {\Pr}{\left[ b_j \notin \bigcup_{i \in N_b} B^0_i \right]}\\
        & = \sum_{b_j \in B} \prod_{i \in N_b} {\Pr}{\left[ b_j \notin B^0_i \right]}\\
        & = \sum_{b_j \in B} \prod_{i \in N_b} \Big(1-{\Pr}{\left[ b_j \in B^0_i \right]}\Big)\\
        & = (m - \ell) \cdot \left(1-\frac{k}{m}\right)^{s}.    
    \end{align*}
    By plugging the values we have so far into the first equation, we get:
    \begin{align*}
        {\Pr}{\left[ \bigcup_{i \in N_b} B^0_i \neq B \right]} & = 
        \sum^n_{s=0} {\Pr}{\left[ \bigcup_{i \in N_b} B^0_i \neq B \mid n_b = s \right]} \cdot {\Pr}{\left[ n_b = s \right]}\\
        & \leq \sum^n_{s=0} (m - \ell) \cdot \left(1-\frac{k}{m}\right)^{s} \cdot 
        {n \choose s} (1-p)^x p^{n-x}\\
        & = (m - \ell) \cdot \sum^n_{s=0} {n \choose s} \left((1-p) \left(1-\frac{k}{m}\right)\right)^x p^{n-s}\\
        & = (m - \ell) \cdot \left( p + (1-p) \left(1-\frac{k}{m}\right) \right)^n\\
        & = (m - \ell) \cdot \left( 1 - (1-p) \frac{k}{m} \right)^n.
    \end{align*}
    Since $1 - (1-p)\nicefrac{k}{m} \in (0, 1)$, it follows that 
    $\left( 1 - (1-p) \nicefrac{k}{m} \right)^n$ goes to $0$ as $n \rightarrow \infty$.
    Therefore:
    $$
        \lim_{n \rightarrow \infty} {\Pr}{\left[ \bigcup_{i \in N_b} B^0_i \neq B \right]} = 0,
    $$
    which is what we wanted to show.
\end{proof}

\begin{proof}[Lemma \ref{lemma:round-1-agent-accuracies}]
    ($\textsf{a}$)
    The argument here is straightforward: 
    if $k \leq m-\ell$, the influx of newly disclosed $m-\ell$ evidence items for $b$
    matches or exceeds any amount of evidence for $a$ among the $k$ 
    items agents have from round $t=0$:
    no initial $a$-supporter survives ($p'_a = 0$), and at the end of $t=1$  
    each voter is either indifferent ($p'_\emptyset > 0$) or votes for $b$.
    If $k < m-\ell$ there is no possibility of ties, 
    so $p'_\emptyset = 0$ and $p'_b = 1$.

    ($\textsf{b}$)
    We start by showing that $p'_b \geq \nicefrac{1}{2}$.
    Agent $i$ votes for $b$ just in case $i$'s has more evidence for
    $b$ than for $a$ at the end of round $1$.
    Since at the end of this round $i$ has all $m-\ell$ evidence items for $b$
    and all the initial evidence items for $a$ (no evidence for $a$ is disclosed at $t=1$),
    this is equivalent to:
    $$
        p'_b = {\Pr}\Big[ |A_i^0| < m-\ell \Big] = {\Pr}\Big[ |A^0_i|\leq m-\ell-1 \Big].
    $$   
    Note that the second inequality of 
    Lemma \ref{lemma:hypergeometric-median-inequality} ($M \leq \lceil \mu \rceil$), 
    together with the definition of median $M$, implies that:
    $$
        \frac{1}{2} \leq 
        {\Pr}\Big[ |A_i^0| \leq M \Big] \leq 
        {\Pr}\Big[ |A_i^0| \leq \lceil \mu \rceil \Big],
    $$
    where $\mu = k \cdot\nicefrac{\ell}{m}$ is the expected value of the 
    hypergeometric variable $|A_i^0|$ and $M$ is its median.
    To ensure the probability of a vote for $b$ at $t=1$ is above $\nicefrac{1}{2}$,
    we push $m-\ell-1$ above $\lceil \mu \rceil$. More precisely, 
    note that $k < \nicefrac{m(m-\ell-1)}{\ell}$ is equivalent to:
    $$
        \mu = k\frac{\ell}{m} < m-\ell-1,
    $$
    which, in turn, implies that $\lceil \mu \rceil \leq m-\ell-1$.
    Chaining this with the inequality $M \leq \lceil \mu \rceil$ from 
    Lemma \ref{lemma:hypergeometric-median-inequality} we get that $M \leq m-\ell-1$, and thus:
    $$
        p'_b = 
        {\Pr}\Big[ |A_i^0| \leq m-\ell-1 \Big] \geq 
        {\Pr}\Big[ |A_i^0| \leq M \Big] 
        \geq \frac{1}{2}.
    $$
    The probability that $i$ votes for $a$ at $t=1$ is:
    $$
        p'_a = {\Pr}\Big[ |A_i^0| > m-\ell \Big],
    $$
    i.e., the event where $i$ starts out with more than $m-\ell$ evidence for $a$,
    which ensures that $i$ keeps their opinion for $a$ even after the influx of $b$-evidence at $t=1$.
    Since $k < m-\ell$, the probability of this event is non-zero.
    There only other remaining possibility is that $i$ finishes round $1$ indifferent:
    $$
        p'_\emptyset = {\Pr}\Big[ |A_i^0| = m-\ell \Big],
    $$
    which is also non-zero.
    We have already shown that the probability of a vote for $b$ at $t=1$ is above $\nicefrac{1}{2}$,
    so it follows that ${\Pr}\big[ v_i^1 = a \big] < \nicefrac{1}{2}$
    and ${\Pr}\big[ v_i^1 = b \big] < 1$.

    ($\textsf{c}$)
    The proof here is analogous to that of ($ii$), 
    and is obtained using the first inequality of 
    Lemma \ref{lemma:hypergeometric-median-inequality} ($\lfloor \mu \rfloor \leq M$)
    and by pushing $m-\ell+1$ below $\lfloor \mu \rfloor$.
\end{proof}

\begin{proof}[Proposition \ref{proposition:round-1-collective-outcome}]
    ($\textsf{a}$)
    If $k \leq m-\ell$, then, 
    by Lemma \ref{lemma:round-1-agent-accuracies}($\textsf{a}$),
    every agent is either indifferent 
    or votes for $b$ at the end of round $1$.
    The probability that everyone is indifferent goes to $0$ as $n \rightarrow \infty$,
    so we can be certain that there is at least one $b$-voter to break the deadlock.
    Thus, as $n \rightarrow \infty$, the collective vote at the end of round $1$ is $b$ with probability $1$.
    Since there are no $a$-voters remaining, deliberation terminates at this round with $b$ as winner.
    
    ($\textsf{b}$)
    If $m-\ell < k < \nicefrac{m(m-\ell-1)}{\ell}$, 
    then Lemma \ref{lemma:round-1-agent-accuracies}($\textsf{b}$) guarantees that, 
    at the end of round $1$,
    agents are more likely to vote for $b$ than for $a$, 
    and thus the collective vote at the end of round $1$ is $b$ with high probability as $n$ grows.
    Note that this marks a reversal in the majority opinion from the initial round, where $a$ had the majority.
    
    ($\textsf{c}$)
    If $\nicefrac{m(m-\ell+1)}{\ell} < k < 2(m-\ell)$, 
    then, by Lemma \ref{lemma:round-1-agent-accuracies}($\textsf{c}$),
    agents are more likely to vote for $a$ 
    than for $b$ at the end of round $1$, and thus the collective vote at the end of round $1$ 
    is $a$ with high probability as $n$ grows. 
    Since all the available evidence for $b$ is already public, 
    there is no new information in favor of $b$ to be disclosed in the next round, 
    and thus deliberation terminates at this round with $a$ as winner.
\end{proof}

\begin{proof}[Lemma \ref{lemma:round-2-agent-accuracies}]
    By Proposition \ref{proposition:round-1-collective-outcome}($\textsf{b}$),
    the collective vote at the end of round $1$ is $b$.
    By Lemma \ref{lemma:round-1-agent-accuracies}($\textsf{b}$),
    there is also a non-zero fraction of $a$-voters, 
    who are now in the minority and thus eager to disclose their evidence for $a$.
    By the same argument as in Lemma \ref{lemma:initial-minority-covers-b-evidence},
    as $n \rightarrow \infty$ the fraction of $a$-voters at the end of round $1$ 
    collectively hold all the available evidence for $a$.
    When this evidence gets revealed at $t=2$ everyone ends
    up knowing all $\ell$ evidence items for $a$.
    By assumption, there is more evidence for $a$ than for $b$ ($\ell > m-\ell$), 
    so the influx of $a$-evidence at $t=2$ ensures that every agent votes
    ends up voting for $a$, i.e., $p''_a = 1$.
\end{proof}

\begin{proof}[Proposition \ref{proposition:round-2-collective-outcome}]
    By Lemma \ref{lemma:round-2-agent-accuracies}, 
    as $n \rightarrow \infty$, every agent votes for $a$ at the end of round $2$,
    so we have a unanimous vote for $a$.
    At this point all evidence is public, and deliberation terminates with $a$ as winner.
\end{proof}
\end{document}